\documentclass[reqno,a4paper]{amsart}


 \addtolength{\oddsidemargin}{-1cm}
 \addtolength{\evensidemargin}{-1cm}
 \addtolength{\textwidth}{2cm}
 \addtolength{\textheight}{0.5cm}

\usepackage{amsmath,amsthm,amssymb,amsfonts,enumerate,graphicx,color,pstricks,marginnote}
\hyphenation{hyper-geometric poly-nomial poly-nomials pre-factor equi-valently mero-morphic}

\pagestyle{myheadings}
\numberwithin{equation}{section} 
\theoremstyle{plain}
\newtheorem{theo+}           {Theorem}      [section]
\newtheorem{prop+}  [theo+]  {Proposition}
\newtheorem{coro+}  [theo+]  {Corollary}
\newtheorem{lemm+}  [theo+]  {Lemma}
\newtheorem{defi+}  [theo+]  {Definition}
\newtheorem{conj+}  [theo+]  {Conjecture}

\theoremstyle{definition}
\newtheorem{rema+}  [theo+]  {Remark}
\newtheorem{prob+}  [theo+]  {Problem}
\newtheorem{exam+}  [theo+]  {Example}
\newtheorem{ass+} [theo+] {Assumption}

\newenvironment{theorem}{\begin{theo+}}{\end{theo+}}
\newenvironment{proposition}{\begin{prop+}}{\end{prop+}}
\newenvironment{corollary}{\begin{coro+}}{\end{coro+}}
\newenvironment{lemma}{\begin{lemm+}}{\end{lemm+}}

\newcommand{\ti}{\mathrm i}

\newcommand{\id}{\operatorname{id}}
\newcommand{\supp}{\operatorname{supp}}

\newcommand{\hH}{{ H}}
\newcommand{\hP}{{ P}}
\newcommand{\hB}{{ B}}
\newcommand{\hS}{{ S}}

\allowdisplaybreaks[1]

\begin{document}

\baselineskip 18pt
\larger[2]
\title
[Higher order deformed elliptic Ruijsenaars operators] 
{Higher order deformed elliptic Ruijsenaars operators}
\author{Martin Halln\"as}
\address{Mathematical Sciences, Chalmers University of Technology and
G\"oteborg University, SE-412 96 G\"oteborg, Sweden}
\email{hallnas@chalmers.se}

\author{Edwin Langmann}
\address{Theoretical Physics,
KTH Royal Institute of Technology,
SE-106 91 Stockholm, Sweden}
\email{langmann@kth.se}

\author{Masatoshi Noumi}
\address{Department of Mathematics, KTH Royal Institute of Technology, SE-100 44 Stockholm, Sweden
(on leave from: Department of Mathematics, Kobe University, Rokko, Kobe 657-8501, Japan)}
\email{noumi@math.kobe-u.ac.jp}

\author{Hjalmar Rosengren}
\address{Mathematical Sciences, Chalmers University of Technology and
G\"oteborg University, SE-412 96 G\"oteborg, Sweden}
\email{hjalmar@chalmers.se}


\begin{abstract}
We present four infinite families of mutually commuting difference operators  which include the deformed elliptic Ruijsenaars operators. The trigonometric limit of 
 this kind of operators was
previously  introduced by Feigin and Silantyev. 
They provide a quantum mechanical description of two kinds of relativistic quantum mechanical particles which can be identified with particles and anti-particles in an underlying quantum field theory. 
We give direct proofs of the commutativity of our operators and of some other fundamental properties such as
kernel function identities. In particular, we give a rigorous proof of the quantum integrability of the deformed Ruijsenaars model.
\end{abstract}

\maketitle        

\section{Introduction}

The quantum Calogero--Moser--Sutherland systems form an important class of
integrable systems in quantum mechanics. Chalykh, Feigin and Veselov \cite{cfv}
discovered that certain deformations of such systems maintain integrability. For
instance, the Schr\"odinger operator
\begin{align*} 
H &= -\sum_{i=1}^{m}\frac1{2}\frac{\partial^2}{\partial x_i^2} - g\sum_{i=1}^{r}\frac{1}{2}\frac{\partial^2}{\partial y_i^2} \\ 
&\quad + 
\sum_{1\leq i<j\leq m}\frac{g(g+1)}{(x_i-x_j)^2} + \sum_{1\leq i<j\leq r}\frac{1/g+1}{(y_i-y_j)^2}
 + \sum_{i=1}^m\sum_{j=1}^r \frac{g+1}{(x_i-y_j)^2}
\end{align*} 
is integrable for arbitrary variable numbers $m$ and $r$  and real coupling parameter $g$ \cite{cfv,se}, 
where $r=0$ is the non-deformed case first studied by Calogero \cite{c}. Such deformed models
turned out to be intimately connected to Lie superalgebras and related analogues
of symmetric functions such as super-Jack polynomials \cite{se,sv}. 
From a physics point of view, the deformed model describes a system of arbitrary numbers of two different kinds of identical particles. 
The Schr\"odinger operator above corresponds to the rational case; the most general elliptic case is obtained by replacing the interaction potential $1/x^2$ by
the Weierstrass $\wp$-function $\wp(x|\omega_1,\omega_2)$.

Ruijsenaars  \cite{r} introduced  relativistic generalizations of  quantum
Calogero--Moser--Sutherland systems, defined by difference operators rather than
  differential operators. 
  Deformed versions of such systems were first
considered by Chalykh \cite{ch1,ch2}. In greater generality, they were introduced and studied by 
Sergeev and Veselov \cite{sv,SV09b} in the trigonometric case and by Atai together with two of us \cite{ahl} in the  elliptic case. 
They describe systems of two kinds of identical particles which can be interpreted as particles and anti-particles in an underlying relativistic quantum field theory \cite{AHL20}. 
Feigin and Silantyev \cite{fs} constructed higher order operators that commute with the first order operators of Sergeev and Veselov.
They also showed that a sufficiently large subset of these operators is algebraically independent, concluding that the deformed models remain integrable in the 
relativistic setting.

In the present paper, we introduce and study elliptic extensions of the operators of Feigin and Silantyev. 
To be more precise, for fixed non-negative integers $m$ and $r$, we introduce a family
of operators  (see \eqref{docm} for the explicit expression)
\begin{equation}\label{fif}H_{m,r}^{(k)}(x_1,\dots,x_m;y_1,\dots,y_r;\delta,\kappa),\qquad k\in\mathbb Z_{\geq 0}.
\end{equation}
They are linear combinations of shift operators acting on 
functions in the $x$- and $y$-variables as
$$f(x_1,\dots,x_m;y_1,\dots,y_r)\mapsto f(x_1+\mu_1\delta,\dots,x_m+\mu_m\delta;y_1-I_1\kappa,\dotsm,y_r-I_r\kappa), $$
where $\mu_j\in\mathbb Z_{\geq 0}$, $I_j\in\{0,1\}$ and the total degree
$\sum_j\mu_j+\sum_j I_j$ is fixed to $k$. 
The parameters $\delta$ and $\kappa$ are related to the standard parameters
of Macdonald polynomial theory by $q=e^{2\ti\pi \delta}$, $t=e^{2\ti\pi\kappa}$. The
Ruijsenaars operators correspond to the case $m=0$ and the case $r=0$ give
the operators of Noumi and Sano \cite{ns}.

Up to a similarity transformation,
the Hamiltonian and momentum operator of the Ruijsenaars model are linear combinations of the operator $H_{0,r}^{(1)}$ and  the same operator with $(\delta,\kappa)$ replaced by $(-\delta,-\kappa)$. 
As we explain in Appendix \ref{appA}, a similar relation holds between the deformed Ruijsenaars model from \cite{ahl} and the operators  $H_{m,r}^{(1)}$. Thus, it is essential to consider commutation relations between  $H_{m,r}^{(k)}$ and appropriate modifications of these operators.
It turns out that there are four mutually commuting infinite families, given by \eqref{fif} together with 
\begin{subequations}\label{if}
\begin{align}
\label{sif}H_{r,m}^{(k)}(y_1,\dots,y_r;x_1,\dots,x_m;-\kappa,-\delta),
&\qquad k\in\mathbb Z_{\geq 0}, \\
\label{tif}H_{m,r}^{(k)}(x_1-\delta,\dots,x_m-\delta;y_1+\kappa,\dots,y_r+\kappa;-\delta,-\kappa), &\qquad k\in\mathbb Z_{\geq 0},\\
\label{foif}H_{r,m}^{(k)}(y_1+\kappa,\dots,y_r+\kappa;x_1-\delta,\dots,x_m-\delta;\kappa,\delta), &\qquad k\in\mathbb Z_{\geq 0}. \end{align}
\end{subequations}
Roughly speaking, in \eqref{sif} we have interchanged the roles of the $x$- and $y$-variables, in \eqref{tif} we have reversed the direction of the shift operators and in \eqref{foif} we have made both these changes. The shifts in the $x$- and $y$-variables present  in \eqref{tif} and \eqref{foif} could be eliminated by an overall translation (see \eqref{vs}), but we avoid that since it would make most of our formulas slightly more complicated.

The main result of the present paper is that the four infinite families of operators \eqref{fif} and \eqref{if}
mutually commute. Moreover, we prove that for generic $\delta$ and $\kappa$, the operators 
$H_{m,r}^{(k)}$
are 
algebraically independent for $1\leq k\leq m+r$. This gives a rigorous proof that 
 the deformed elliptic Ruijsenaars model is quantum integrable, which has until now been an unsolved problem. 
We also prove that the operators \eqref{sif} are in the algebraic closure of the 
operators \eqref{fif} (and vice versa). This generalizes the result of \cite{ns} that the Noumi--Sano operators are in the algebraic closure of the Ruijsenaars operators. 
The other two families are clearly outside this closure, since they act by shifts in the opposite direction.
Finally, we show that our operators satisfy kernel function identities with respect to the same kernel function that was obtained in \cite{ahl} in the first order case. 

Our proofs are direct and based  on non-trivial identities for theta functions that we refer to as source identities. They are also closely related to transformation formulas for multiple elliptic hypergeometric series found in \cite{kn,lsw,r2}.

In the main text, we present and prove the results in an additive notation close to the one used  by Ruijsenaars \cite{r}. For the convenience of the reader, in Appendix \ref{multapp} we give the key formulas in the multiplicative notation generalizing the one used in the theory of Macdonald polynomials \cite{Mac95}.


{\bf Acknowledgements:} M.\ N.\ is grateful to the Knut and Alice Wallenberg Foundation for funding his
guest professorship at KTH. Financial support from the Swedish Research Council
is acknowledged by M.\ H.\ (Reg.\ nr.\ 2018-04291) and H.\ R.\ (Reg.\ nr.\ 2020-04221).

\section{Main results}

We fix a non-zero odd entire function  $x\mapsto[x]$, which satisfies the identity
\begin{equation}\label{wad}[x+y][x-y][u+v][u-v]+[x+v][x-v][y+u][y-u]=[x+u][x-u][y+v][y-v]. \end{equation}
A generic such function can be written 
\begin{equation}\label{ws}[x]=Ce^{c x^2}\sigma(x|\omega_1,\omega_2),\end{equation}
 where $\sigma$ is the Weierstrass sigma function \cite{ww}. 
 For our purposes, the prefactor $Ce^{cx^2}$ can be viewed as a normalization and plays no essential role. 
  Degenerate cases include the trigonometric solutions $[x]=\sin(\pi x/\omega)$, the hyperbolic solutions
  $[x]=\sinh(\pi x/\omega)$ and the rational solution $[x]=x$.  

Throughout,  $\delta$ and $\kappa$ are fixed parameters.
For simplicity, we will assume that 
\begin{equation}\label{gp}[n\delta]\neq 0,\quad [n\kappa]\neq 0,\qquad n\in\mathbb Z_{>0}. \end{equation}
See the end of Appendix \ref{appA} for a discussion of this condition.

For $k\in\mathbb Z_{\geq 0}$ we will write
\begin{equation}\label{sf}[x]_{k}=[x][x+\delta]\dotsm [x+(k-1)\delta] \end{equation}
and, for negative subscripts,
$$[x]_{-k}=\frac{1}{[x-k\delta ]_k}=\frac 1{[x-k\delta ][x-(k-1)\delta]\dotsm [x-\delta]}. $$
Occasionally, we indicate the dependence on $\delta$ as $[x;\delta]_k$. 

We write $T_x^\delta$ for the difference operator
$$T_x^{\delta}f(x)=f(x+\delta)$$
and, more generally,
$$T_x^{\delta\mu}f(x_1,\dotsm,x_n)=f(x_1+\mu_1\delta,\dots,x_n+\mu_n\delta), $$
when $x$ and $\mu$ are vectors. 
We will write $\langle n\rangle=\{1,\dots,n\}$  
(the notation $[n]$ is more common, but we wish to avoid confusion with the 
function satisfying \eqref{wad}). 
Subsets $I\subseteq\langle n\rangle$ will be identified with
vectors $(I_1,\dots,I_n)\in\{0,1\}^n$, where $I_j=1$ for $j\in I$ and $I_j=0$ otherwise. With this identification, we can write
$T_x^{\delta I}=\prod_{j\in I}T_{x_j}^\delta$. 
The higher order Ruijsenaars operators are   defined by 
\begin{equation}\label{ro}D_n^{(k)}=\sum_{I\subseteq\langle n\rangle,\,|I|=k}\,\prod_{i\in I,\,j\in I^c}\frac{[x_i-x_j+\kappa]}{[x_i-x_j]}\cdot T_x^{\delta I}, \end{equation}
where $I^c$ denotes the complement of $I$ in $\langle n\rangle$.
 It is a non-trivial fact that these operators commute for $0\leq k\leq n$ \cite{r}.

 Noumi and Sano \cite{ns} introduced another family of elliptic difference operators, which we denote 
$$H_n^{(k)}=\sum_{\mu\in\mathbb Z_{\geq 0}^n,\,|\mu|=k}\,
\prod_{1\leq i<j\leq n}\frac{[x_i-x_j+(\mu_i-\mu_j)\delta]}{[x_i-x_j]}
\prod_{i,j=1}^n\frac{[x_i-x_j+\kappa]_{\mu_i}}{[x_i-x_j+\delta]_{\mu_i}}\cdot
T_x^{\delta\mu}.$$
Here, $|\mu|=\mu_1+\dots+\mu_n$. They proved that they are related to the Ruijsenaars operators through the so called Wronski relation
 \begin{equation}\label{wr}\sum_{k+l=K} (-1)^k[k\kappa+l\delta] D_n^{(k)}H_n^{(l)}=0, \qquad K=1,2,3,\dots. \end{equation}
 This can be used to recursively write $H_n^{(l)}$ as a polynomial 
 in the operators $D_n^{(k)}$. As a consequence,
 all these operators commute.

 In the present work we introduce and study a family of difference operators $H_{m,r}^{(k)}$  in
 $m+r$ variables 
 $$(x_1,\dots,x_m,y_1,\dots,y_r),$$
  which generalize both $D_m^{(k)}$ and $H_r^{(k)}$. They are defined by
  \begin{subequations}\label{docm}
 \begin{equation}\label{do}H_{m,r}^{(k)}=\sum_{\substack{\mu\in\mathbb Z_{\geq 0}^m,\,I\subseteq \langle r\rangle,\\|\mu|+|I|=k}} 
 C_{\mu,I}(x;y)\,T_x^{\delta\mu}T_y^{-\kappa I},
 \end{equation}
where 
\begin{multline}\label{cm} C_{\mu,I}(x;y)=(-1)^{|I|}\prod_{1\leq i<j\leq m}\frac{[x_i-x_j+(\mu_i-\mu_j)\delta]}{[x_i-x_j]}
\prod_{i\in I,\,j\in I^c}\frac{[y_i-y_j-\delta]}{[y_i-y_j]}\\
\times\prod_{i,j=1}^m\frac{[x_i-x_j+\kappa]_{\mu_i}}{[x_i-x_j+\delta]_{\mu_i}}
\prod_{i=1}^m\left(\prod_{j\in I}\frac{[x_i-y_j-\kappa]}{[x_i-y_j+\mu_i\delta]}\prod_{j\in I^c}\frac{[x_i-y_j-\delta]}{[x_i-y_j+(\mu_i-1)\delta]}\right).
\end{multline}
\end{subequations}
If $m>0$, $H_{m,r}^{(k)}$ 
is well-defined only if $[j\delta]\neq 0$ for $1\leq j\leq k$, since
 otherwise the factor $[x_i-x_j+\delta]_{\mu_i}$ vanishes for $j=i$ and $\mu_i=k$. 
 This is guaranteed by our assumption \eqref{gp}.

Several special cases of the operators \eqref{do} are known in the literature. 
Clearly, $H_{m,0}^{(k)}$ is equal to the Noumi--Sano operator $H_m^{(k)}$.
The operator $H_{0,r}^{(k)}$ is equal to the Ruijsenaars operator $(-1)^kD_r^{(k)}$,  with 
 $\delta$ and $-\kappa$ interchanged. The operators $H_{m,r}^{(1)}$ are 
the deformed Ruijsenaars operators introduced in \cite{ahl}. 
Finally, the trigonometric limit of the general operators $H_{m,r}^{(k)}$ 
were introduced by Sergeev and Veselov \cite{sv} for $k=1$ and by  Feigin and Silantyev \cite{fs} in general; see also \cite{hlnr}.
To make the connection to the operators of \cite{fs} one should rewrite \eqref{cm} using
the elementary identity 
$$ \prod_{1\leq i<j\leq m}\frac{[x_i-x_j+(\mu_i-\mu_j)\delta]}{[x_i-x_j]}
\prod_{i,j=1}^m\frac{1}{[x_i-x_j+\delta]_{\mu_i}}
=(-1)^{|\mu|}\prod_{i,j=1}^m\frac{1}{[x_i-x_j-\delta\mu_j]_{\mu_i}}.
$$

Our first main result is that these operators commute.

\begin{theorem}\label{ct}
The operators \eqref{do} satisfy $[H_{m,r}^{(k)},H_{m,r}^{(l)}]=0$
for all $k,l\in\mathbb Z_{\geq 0}$.
\end{theorem}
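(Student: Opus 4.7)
The plan is to prove commutativity by computing the composition $H_{m,r}^{(k)} H_{m,r}^{(l)}$ explicitly and matching coefficients of each resulting shift operator with those of $H_{m,r}^{(l)} H_{m,r}^{(k)}$. First I would expand
\[
H_{m,r}^{(k)} H_{m,r}^{(l)} = \sum_{\substack{\mu,\mu',I,I'\\ |\mu|+|I|=k,\,|\mu'|+|I'|=l}} C_{\mu,I}(x;y)\,C_{\mu',I'}(x+\mu\delta;\,y-I\kappa)\, T_x^{\delta(\mu+\mu')} T_y^{-\kappa(I+I')},
\]
using that $T_x^{\delta\mu} T_y^{-\kappa I}$ acts on the coefficient of the second factor by shifting its arguments. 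Grouping by the total shift $T_x^{\delta\nu} T_y^{-\kappa J}$, with $\nu=\mu+\mu'\in\mathbb{Z}_{\geq 0}^m$ and $J=I+I'\in\{0,1,2\}^r$, commutativity reduces to showing that, for each fixed $(\nu,J)$, the coefficient
\[
\mathcal{A}_{\nu,J}^{(k,l)}(x;y) = \sum_{\substack{\mu+\mu'=\nu,\,I+I'=J\\ |\mu|+|I|=k}} C_{\mu,I}(x;y)\,C_{\mu',I'}(x+\mu\delta;\,y-I\kappa)
\]
depends only on $k+l=|\nu|+|J|$, not on $k$ separately, i.e., $\mathcal{A}_{\nu,J}^{(k,l)}=\mathcal{A}_{\nu,J}^{(l,k)}$.

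The heart of the argument would be a \emph{source identity}: a multivariable theta-function identity which, applied to each decomposition $(\mu,I,\mu',I')$, collapses $\mathcal{A}_{\nu,J}^{(k,l)}$ to a closed form that is manifestly symmetric in $(k,l)$. The natural building block is the three-term addition formula \eqref{wad}, supplemented by multivariable elliptic hypergeometric transformations of Kajihara--Noumi, Langer--Schlosser--Warnaar, and Rosengren type alluded to in the introduction. A plausible strategy is induction on $m+r$: the boundary cases $m=0$ (Ruijsenaars) and $r=0$ (Noumi--Sano) are known, and the inductive step would add a single $x$- or $y$-variable while tracking the extra cross-factors $[x_i-y_j-\kappa]/[x_i-y_j+\mu_i\delta]$ and $[x_i-y_j-\delta]/[x_i-y_j+(\mu_i-1)\delta]$ in $C_{\mu,I}$. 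Alternatively, one could characterize both sides of the source identity as elliptic functions with the same quasi-periods in a distinguished variable and match them by comparing a finite list of zeros or residues.

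The main obstacle I anticipate is the source identity itself. The trigonometric analogue of Feigin--Silantyev leans on partial-fraction expansions that are unavailable elliptically, so a genuinely new combinatorial-analytic argument is needed to organize the mixed contributions coming from the interaction of $\delta$-shifts in $x$ with $-\kappa$-shifts in $y$; in particular the nontrivial factor $[y_i-y_j-\delta]/[y_i-y_j]$ in \eqref{cm} couples the $y$-block to the $\delta$-parameter of the $x$-block, which is the essential new feature beyond the two boundary cases. Once the source identity is in hand, the conclusion $H_{m,r}^{(k)} H_{m,r}^{(l)}=H_{m,r}^{(l)} H_{m,r}^{(k)}$ follows at once from the shift-by-shift coefficient equality $\mathcal{A}_{\nu,J}^{(k,l)}=\mathcal{A}_{\nu,J}^{(l,k)}$.
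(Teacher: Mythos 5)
Your setup exactly matches the paper's: expand $H_{m,r}^{(k)}H_{m,r}^{(l)}$, collect terms by the total shift $T_x^{\delta\nu}T_y^{-\kappa J}$, and reduce commutativity to a symmetry of the coefficient sums $\mathcal{A}_{\nu,J}^{(k,l)}=\mathcal{A}_{\nu,J}^{(l,k)}$ (after also decomposing $J=I+I'$ into the disjoint pieces $K=I\cap I'$, $L=I\triangle I'$, $M=\langle r\rangle\setminus(I\cup I')$ as in the paper's Section 4). You also correctly recognize that the whole theorem hinges on a single scalar \emph{source identity}, which is the paper's own terminology. So far so good.

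The genuine gap is that you do not actually supply the source identity, and the strategies you propose for finding it are not what works. You suggest building it from the addition formula \eqref{wad} together with Kajihara--Noumi or Langer--Schlosser--Warnaar transformations, or proving it by induction on $m+r$, or by comparing zeros/residues of elliptic functions. None of these is the paper's argument, and it is not clear any of them would close. The paper instead reuses \emph{Ruijsenaars' own identity} \eqref{rsi},
\[
\sum_{|I|=k}\prod_{i\in I,\,j\in I^c}\frac{[z_i-z_j-a][z_i-z_j-b]}{[z_i-z_j][z_i-z_j-a-b]}
=\sum_{|I|=n-k}\prod_{i\in I,\,j\in I^c}\frac{[z_i-z_j-a][z_i-z_j-b]}{[z_i-z_j][z_i-z_j-a-b]},
\]
which is exactly the identity used to prove commutativity of the undeformed Ruijsenaars operators. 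The crucial new idea --- which your proposal is missing --- is the specialization: set $a=\delta$, $b=\kappa-\delta$, and take
\[
(z_1,\dots,z_n)=\bigl(x_1,x_1+\delta,\dots,x_1+(\lambda_1-1)\delta,\,\dots,\,x_m,\dots,x_m+(\lambda_m-1)\delta,\,y_1,\dots,y_r\bigr),
\]
so that the first block of coordinates is a union of $\delta$-arithmetic progressions and the last $r$ are the $y$'s. Under this specialization, the factor $[z_i-z_j-\delta]$ kills all subsets $I$ except those of the form $(\underbrace{1,\dots,1}_{\mu_1},0,\dots,\underbrace{1,\dots,1}_{\mu_m},0,\dots,P)$ with $0\le\mu_j\le\lambda_j$ and $P\subseteq\langle r\rangle$, and the products in \eqref{rsi} telescope to reproduce precisely the coefficients of \eqref{cm}. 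This is what reduces the deformed commutativity to the known Ruijsenaars identity, after a careful (but elementary) bookkeeping of which factors can be cancelled from both sides. Without this specialization, the proposal remains a plan rather than a proof; I would say the step you flagged as the "main obstacle" is indeed the whole content of the theorem, and the obstacle is not overcome by the avenues you suggest.
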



Next, we  prove that $m+r$ of the operators $H_{m,r}^{(k)}$ are algebraically independent. We interpret this as a rigorous formulation of quantum integrability for
 the deformed elliptic Ruijsenaars model.

\begin{theorem}\label{ait}
For generic  $\kappa$ and $\delta$, the operators $H_{m,r}^{(k)}$, $k=1,\dots,m+r$, are algebraically independent.
\end{theorem}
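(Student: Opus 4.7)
The plan is to reduce the statement to the already-established algebraic independence of the Feigin--Silantyev operators in the trigonometric case \cite{fs}. As the authors note in the introduction, the trigonometric limit of $H_{m,r}^{(k)}$ is precisely the Feigin--Silantyev operator; consequently any polynomial relation among the elliptic $H_{m,r}^{(k)}$ should descend, through the limit, to the same relation among their trigonometric counterparts, and the result will follow from a contradiction with \cite{fs}.

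Concretely, I would first set up the degeneration. Writing $[x]=Ce^{cx^2}\sigma(x|\omega_1,\omega_2)$ via \eqref{ws}, fix $\omega_1$, $\delta$ and $\kappa$ and let $\omega_2\to i\infty$, choosing the prefactor $Ce^{cx^2}$ to depend on $\omega_2$ so that $[x]$ converges uniformly on compact sets to a non-zero multiple of $\sin(\pi x/\omega_1)$. Each coefficient $C_{\mu,I}(x;y)$ in \eqref{cm} is a finite rational expression in values of $[\,\cdot\,]$ at points depending meromorphically on $(x,y,\delta,\kappa)$, so off its pole locus it converges to the analogous Feigin--Silantyev coefficient. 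Viewing a difference operator as the formal sum $\sum_{(\mu,I)} a_{\mu,I}(x,y)\, T_x^{\delta\mu}T_y^{-\kappa I}$, the elliptic operator $H_{m,r}^{(k)}$ therefore converges coefficient-wise to its trigonometric analogue.

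Next, I would argue by contradiction: suppose that for $(\delta,\kappa)$ in a Zariski-open subset of $\mathbb{C}^2$ there exists a non-zero polynomial $P\in\mathbb{C}[z_1,\ldots,z_{m+r}]$ with $P(H_{m,r}^{(1)},\ldots,H_{m,r}^{(m+r)})=0$. Expanding and collecting shift monomials gives, for each fixed pair $(\nu,J)$, a meromorphic identity in $(x,y)$ whose coefficients depend analytically on $\omega_2$; each such identity would persist in the limit $\omega_2\to i\infty$ and so produce the same polynomial relation among the Feigin--Silantyev operators, which by \cite{fs} forces $P\equiv 0$. The main technical hurdle is the independence of the limit from the auxiliary Gaussian prefactor $Ce^{cx^2}$: a change of $(C,c)$ amounts to conjugating $H_{m,r}^{(k)}$ by a Gaussian $\Phi(x,y)=\exp(q(x,y))$ with $q$ a suitable quadratic, which is a similarity transformation preserving all polynomial relations, so the final conclusion does not depend on the choice of $(C,c)$ used to realise the trigonometric limit. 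With that in place, the remainder of the argument is routine.
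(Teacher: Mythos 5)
Your strategy---degenerate to a case where algebraic independence is known and lift back by analyticity---is related in spirit to what the paper does, but there is a genuine gap in the way you have set it up, and it lies in which parameter you are degenerating.

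The theorem is for a \emph{fixed} function $[x]$ (fixed $\omega_1,\omega_2$, hence a fixed elliptic nome $p$), and the genericity is only over $(\delta,\kappa)$. Your degeneration sends $\omega_2\to i\infty$, i.e.\ $p\to 0$, which leaves the theorem's parameter space. The hypothesized polynomial relation $P(H_{m,r}^{(1)},\ldots,H_{m,r}^{(m+r)})=0$ is assumed only at the fixed $p$ of the theorem; there is no reason its coefficient identities should hold for nearby values of $p$, so the phrase ``each such identity would persist in the limit $\omega_2\to i\infty$'' is unsupported. What the degeneration \emph{could} legitimately give you is this: the obstruction to independence is the failure of a certain matrix of meromorphic functions (entries in $p,\delta,\kappa$) to have maximal rank, and since a suitable minor of that matrix is analytic in $p$ and nonvanishing at $p=0$ (by Feigin--Silantyev), the minor is nonzero for \emph{generic} $p$. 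That proves ``for generic $[x]$, for generic $(\delta,\kappa)$,'' which is strictly weaker than the theorem. A counterexample profile to keep in mind: a minor of the form $(1-p/p_1)f(\delta,\kappa)$ is nonzero at $p=0$ yet vanishes identically at $p=p_1$. Your argument would miss the failure at $p=p_1$.

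The paper avoids this by specializing \emph{inside} the theorem's parameter space: it proves (Lemma~\ref{sil}) algebraic independence directly at $\kappa=\delta$ for the fixed $[x]$, by observing that at $\kappa=\delta$ the coefficients $C_{\mu,I}$ become a conjugate of a pure shift operator, so the question reduces to algebraic independence of the symmetric polynomials $\sum_{i+j=k}(-1)^j h_i(\xi)e_j(\eta)$. Then algebraic independence is recast as maximal rank of matrices $(C_{\mu,\nu}^\lambda(x;y))$ of functions meromorphic in $(\delta,\kappa)$ for the fixed $[x]$; since the rank is maximal at $\kappa=\delta$, it is maximal for generic $(\delta,\kappa)$. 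This is exactly the analytic-continuation step you had in mind, but carried out in the correct parameter direction. If you want to keep the Feigin--Silantyev ingredient, you would need to show that the relevant minor, as a function of $(\delta,\kappa)$ alone with $p$ fixed, is not identically zero---and that is precisely what Lemma~\ref{sil} supplies, without appealing to the trigonometric limit at all.

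A smaller point: your contradiction hypothesis (``suppose that for $(\delta,\kappa)$ in a Zariski-open subset there exists a nonzero polynomial $P$ \ldots'') is stronger than the negation of the theorem; the negation only says that no dense open set of $(\delta,\kappa)$ has the independence property, which does not by itself produce a single polynomial relation valid on a Zariski-open set, nor a $P$ independent of $(\delta,\kappa)$. The matrix-rank reformulation in the paper sidesteps this quantifier issue as well, since it replaces the infinitude of candidate polynomials by a single analytic nonvanishing condition for each shift degree $N$.
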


As mentioned in the introduction, one can construct further commuting operators by making appropriate modification to $H_{m,r}^{(k)}$. We will first consider the family \eqref{tif}. Writing the coefficients \eqref{cm} as 
$C_{\mu,I}(x;y;\delta,\kappa)$, we denote the corresponding operators
\begin{equation}\label{hh}\hat H_{m,r}^{(k)}=\sum_{\substack{\mu\in\mathbb Z_{\geq 0}^m,\,I\subseteq \langle r\rangle,\\|\mu|+|I|=k}} 
 C_{\mu,I}(x_1-\delta,\dots,x_m-\delta;y_1+\kappa,\dots,y_r+\kappa;-\delta,-\kappa)\,T_x^{-\delta\mu}T_y^{\kappa I}.
 \end{equation}
 Since they are obtained from $H_{m,r}^{(k)}$ by a change of variables, these operators  commute among themselves. Our second main result states that the two families are mutually commuting.

\begin{theorem}\label{sct}
We have $[H_{m,r}^{(k)},\hat H_{m,r}^{(l)}]=0$ for all $k,l\in\mathbb Z_{\geq 0}$.
\end{theorem}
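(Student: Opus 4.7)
The plan is to mirror the strategy used for Theorem \ref{ct}, reducing the vanishing of $[H_{m,r}^{(k)}, \hat H_{m,r}^{(l)}]$ to a source-type theta-function identity and then invoking the same kind of machinery that underlies Theorem \ref{ct}.

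First I would expand $H_{m,r}^{(k)} \hat H_{m,r}^{(l)}$ by writing each factor as its defining sum and pushing coefficient functions past the shift operators. The product of a term with shift $T_x^{\delta\mu} T_y^{-\kappa I}$ from $H_{m,r}^{(k)}$ and a term with shift $T_x^{-\delta\nu} T_y^{\kappa J}$ from $\hat H_{m,r}^{(l)}$ contributes to the total shift $T_x^{\delta\alpha} T_y^{\kappa\beta}$ with $\alpha = \mu - \nu$ and $\beta = J - I$. Subtracting the corresponding expansion of $\hat H_{m,r}^{(l)} H_{m,r}^{(k)}$, the commutator vanishes if and only if for every fixed $(\alpha,\beta) \in \mathbb{Z}^m \times \{-1,0,1\}^r$ the corresponding sum of rational functions of $x$ and $y$ is identically zero.

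Next I would try to recognise each of these coefficient identities as a specialisation of a source identity of the kind described in the introduction as underpinning the paper. Since $\hat H_{m,r}^{(l)}$ is obtained from $H_{m,r}^{(l)}$ by the parameter reversal $(\delta,\kappa) \mapsto (-\delta,-\kappa)$ combined with the translations $x_i \mapsto x_i - \delta$, $y_j \mapsto y_j + \kappa$, I expect the identity required here to match the one used for Theorem \ref{ct} after applying these substitutions, possibly combined with a relabelling of summation indices. A subtle point is that the overlap between $\mu$ (from $H^{(k)}$) and $\nu$ (from $\hat H^{(l)}$) produces partial cancellation of shifts, so the left-hand side of the needed identity must be reorganised by decomposing $\mu$ and $\nu$ through a common overlap $\rho \in \mathbb{Z}_{\geq 0}^m$, and similarly for $I,J$ via the three pieces $I \cap J$, $I \setminus J$, $J \setminus I$ of their symmetric-difference structure. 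Once this bookkeeping is in place, the identity should follow from the Riemann-type relation \eqref{wad} by the same sort of theta-function manipulation used for Theorem \ref{ct}.

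The main obstacle is precisely this index-bookkeeping. In Theorem \ref{ct} all shifts point in the same direction, so the combinatorics of contributing quadruples $(\mu,I,\nu,J)$ is relatively tame. Here the mixing of forward and backward shifts means many more quadruples contribute to the same total shift $(\alpha,\beta)$, including those with nontrivial overlap $\rho$ that produce cancellations invisible in the pure-forward case. If these partial-cancellation terms collapse cleanly after substituting $(\delta,\kappa)\mapsto(-\delta,-\kappa)$ in the existing source identity, the theorem follows at once; otherwise a companion source identity must be established, but such an identity should still be provable by repeated application of \eqref{wad} in the spirit of the original proof. Before attempting the general case I would test the approach on the extreme cases $m=0$ and $r=0$, where the claim reduces to commutativity of the Ruijsenaars operators $D_r^{(k)}$, respectively the Noumi--Sano operators $H_m^{(k)}$, with their reverse-shift analogues; these special cases isolate the combinatorial issues of overlap and sign reversal without the extra complication of mixed $x$- and $y$-variables.
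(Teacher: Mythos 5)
Your overall strategy is the right one and mirrors the paper: expand the product of difference operators, collect terms with the same total shift, and reduce commutativity to a family of scalar identities for the coefficient functions, ultimately tracing back to the same source identity \eqref{rsi} used for Theorem \ref{ct}. You also correctly sense that the mixing of forward and backward shifts is what makes this case harder. But there is a genuine gap in the proposal: you do not identify the essential new ingredient needed to close it, namely an analytic-continuation step.

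Here is where the direct substitution you hope for breaks down. In Theorem \ref{ct} the relevant scalar identity (Proposition \ref{csp}) is a sum over the \emph{bounded box} $0\leq\mu_j\leq\lambda_j$, and this bounded range is precisely what falls out of specialising the $z$-variables of \eqref{rsi} to arithmetic progressions of lengths $\lambda_j$. For the present theorem, after collecting by total shift $(\lambda,K)$, the surviving summation region becomes the \emph{half-bounded cone} $\mu_j\geq\max(0,\lambda_j)$ (an infinite sum), because $\hat H^{(l)}$ contributes shifts of the opposite sign. No naive substitution of $(\delta,\kappa)\mapsto(-\delta,-\kappa)$ into the box-type identity produces a cone-type identity; they have different supports. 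The paper's fix is to rewrite Proposition \ref{csp} as a symmetry $T_k(x;y;z)=T_k(z;\hat y;x)$ valid a priori only when $x_j+z_j=-\lambda_j\delta$ for non-negative integers $\lambda_j$, and then prove (Proposition \ref{lsl}) by analytic continuation in the $x_j$-variables — using quasi-periodicity and the genericity assumption \eqref{gp} — that the identity actually holds for \emph{arbitrary} $z_j$. With that extension in hand, a different limit specialisation $z_j\to -x_j-\lambda_j\delta+\kappa$ yields exactly the cone-type identity. Without this continuation step, your plan stalls; the fallback you propose (establishing a ``companion source identity'' by hand from \eqref{wad}) is vaguer and in effect would require rediscovering this mechanism. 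Finally, your suggested sanity check on $m=0$ would not exercise the hard part: as the paper notes, for $m=0$ there is a short-cut $\hat H_{0,r}^{(k)}=H_{0,r}^{(r-k)}(H_{0,r}^{(r)})^{-1}$ that makes the commutation trivial, and this argument simply does not extend to $m>0$, so success in that special case is not informative.
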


In the special case $m=0$, Theorem \ref{sct} follows  from Theorem \ref{ct} and the observation  that \cite{r}
$$\hat H_{0,r}^{(k)}=H_{0,r}^{(r-k)} \left(H_{0,r}^{(r)}\right)^{-1}. $$
 We stress that  when $m>0$ this simple argument does not work, and Theorem \ref{ct} requires a separate proof.

Next, we consider the family \eqref{sif}, which we denote
$$D_{m,r}^{(k)}=\sum_{\substack{\mu\in\mathbb Z_{\geq 0}^r,\,I\subseteq \langle m\rangle,\\|\mu|+|I|=k}} 
 C_{\mu,I}(y;x;-\kappa,-\delta)\,T_x^{\delta I}T_y^{-\kappa\mu}.$$
 The Ruijsenaars operator \eqref{ro} can be written $D_m^{(k)}=(-1)^kD_{m,0}^{(k)}$. 
 Our third main result states that the Wronski relation \eqref{wr} 
 extends to the deformed case $r>0$.

\begin{theorem}\label{wt}
The operators $D_{m,r}^{(k)}$ and $H_{m,r}^{(l)}$ are related by
\begin{equation}\label{dhr}\sum_{k+l=K}[k\kappa+l\delta] \,D_{m,r}^{(k)} H_{m,r}^{(l)}=0, \qquad K\in\mathbb Z_{>0}. \end{equation}
\end{theorem}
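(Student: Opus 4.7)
The plan is to expand the left-hand side of \eqref{dhr} as a sum of shift operators and reduce the theorem to a family of scalar theta-function identities. Composing the operators, one finds that $D_{m,r}^{(k)}H_{m,r}^{(l)}$ is a sum over quadruples $(I,\mu,\nu,J)$ with $I\subseteq\langle m\rangle$, $\mu\in\mathbb Z_{\ge 0}^r$, $\nu\in\mathbb Z_{\ge 0}^m$, $J\subseteq\langle r\rangle$, $|I|+|\mu|=k$, $|\nu|+|J|=l$, of the scalar
$C_{\mu,I}(y;x;-\kappa,-\delta)\cdot(T_x^{\delta I}T_y^{-\kappa\mu})[C_{\nu,J}(x;y;\delta,\kappa)]$
followed by the shift $T_x^{\delta(I+\nu)}T_y^{-\kappa(\mu+J)}$. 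Collecting terms by the total shift $(\alpha,\beta):=(I+\nu,\mu+J)\in\mathbb Z_{\ge 0}^m\times\mathbb Z_{\ge 0}^r$ and weighting by $[k\kappa+l\delta]$ as $k+l$ runs over a fixed value $K$, the theorem reduces to showing that for every $K\ge 1$ and every $(\alpha,\beta)$ the scalar sum over decompositions $I+\nu=\alpha$ and $\mu+J=\beta$ vanishes.

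This vanishing is a \emph{source identity} of the same flavour as those that drive the proofs of Theorems \ref{ct} and \ref{sct}. It specialises to the Noumi--Sano Wronski relation \eqref{wr} when $r=0$ (forcing $\beta=0$ and $J=\varnothing$) and, dually, to an analogous identity on the $y$-side when $m=0$. My approach would be to prove the identity by induction on $m+r$: using the three-term relation \eqref{wad}, one may try to factor out the leading theta dependence on one chosen variable (say $x_m$ or $y_r$) and express the $(m,r)$-source identity as a combination of $(m-1,r)$- and $(m,r-1)$-versions, bottoming out in the classical $(n,0)$ and $(0,n)$ Wronski relations. Alternatively, one could attempt to realise the identity as a direct consequence of a multiple elliptic hypergeometric transformation from \cite{kn,lsw,r2}: the summand has the structure of a terminating Kajihara--Noumi-type series, and $[k\kappa+l\delta]$ is the kind of interpolating prefactor that arises naturally from telescoping such sums.

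The main obstacle is the hybrid nature of the summation: on each side one has both $\mathbb Z_{\ge 0}$-valued compositions (producing Noumi--Sano-type quotients) and $\{0,1\}$-valued characteristic vectors (producing Ruijsenaars-type quotients), and in the composition these interact non-trivially through the cross-factors $[x_i-y_j+\cdots]$ in $C_{\mu,I}$. To get a clean induction, it will probably be necessary to repackage the scalar coefficient so that the Wronski weight $[k\kappa+l\delta]$ splits bilinearly across the two sides, decoupling the $x$-sums from the $y$-sums modulo boundary terms that themselves telescope. Once the source identity is in hand, Theorem \ref{wt} is immediate; the work lies entirely in proving the identity.
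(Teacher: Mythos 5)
Your reduction is sound: collecting by total shift reduces \eqref{dhr} to a family of scalar theta identities, and you correctly note that $r=0$ recovers the Noumi--Sano Wronski relation. But the proposal stops exactly where the hard work begins, and the two strategies you float for proving the scalar identity do not lead anywhere close to the paper's argument.

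The missing idea is that the relevant source identity is not a Kajihara--Noumi-type transformation from \cite{kn,lsw,r2}, nor is an induction on $m+r$ via \eqref{wad} needed. It is the \emph{same} source identity \eqref{nssi} that already drives the $r=0$ case in \cite{ns} (a consequence of the Frobenius determinant), with a non-obvious specialisation. Concretely, the paper first reparametrises the composite sum by $\nu\mapsto\nu-I$, $\mu\mapsto\mu-J$, so the constraints become $I\subseteq\supp(\nu)=:M$ and $J\subseteq\supp(\mu)=:N$ and the total shift is simply $(\nu,\mu)$. After normalising by the term $I=\emptyset$, $J=N$, one obtains a finite sum $D_{\mu,\nu}(x;y)$ over pairs $(I,J)$ with $I\subseteq M$, $J\subseteq N$, whose vanishing for $|\mu|+|\nu|>0$ is equivalent to \eqref{dhr}. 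One then splits the index set in \eqref{nssi} as $\langle n\rangle = M\sqcup N$, renames $z,w$ on the two blocks, trades $I\cap M\mapsto I$ and $I^c\cap N\mapsto J$, and finally substitutes $a\mapsto\delta-\kappa$, $x_i\mapsto x_i$, $y_i\mapsto y_i-(\mu_i-1)\kappa$, $u_i\mapsto x_i+\nu_i\delta$, $v_i\mapsto y_i+\delta$ to land exactly on $D_{\mu,\nu}(x;y)=0$.

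Two further points. First, your collection step ``$(\alpha,\beta):=(I+\nu,\mu+J)$'' is correct but awkward: without the reparametrisation $\nu\to\nu-I$, $\mu\to\mu-J$ you are left summing over decompositions of $\alpha$ and $\beta$, and the ``hybrid'' interaction you worry about becomes genuinely unpleasant. With the reparametrisation the $\{0,1\}$-vectors $I,J$ range over subsets of the \emph{supports} of fixed compositions, and the Wronski weight $[k\kappa+l\delta]$ becomes $[(|\mu|+|I|-|J|)\kappa + (|\nu|+|J|-|I|)\delta]$, matching the prefactor $[|z|-|w|+|I|a]/[|z|-|w|]$ in \eqref{nssi} on the nose. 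Second, the obstacle you identify --- that the Wronski weight should ``split bilinearly'' and decouple the $x$-sums from the $y$-sums modulo telescoping --- is not how the proof works; the $x$- and $y$-blocks are treated symmetrically inside a single application of \eqref{nssi}, not decoupled. Without identifying \eqref{nssi} and the specific substitution above, the proof is not complete.
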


Since $D_{m,r}^{(0)}=\id$, we can alternatively write
\begin{equation}\label{wre}H_{m,r}^{(K)}=-\frac 1{[K\delta]}\sum_{k=1}^K[k\kappa+(K-k)\delta]\, D_{m,r}^{(k)} H_{m,r}^{(K-k)}. \end{equation}
This gives a recursion for computing $H_{m,r}^{(l)}$ as a polynomial in the operators $D_{m,r}^{(k)}$. As a consequence, we have the following result.

\begin{corollary}\label{wc}
The operator $ H_{m,r}^{(l)}$ is in the algebra generated by $D_{m,r}^{(k)}$ for $1\leq k\leq l$. In particular, $[D_{m,r}^{(k)},H_{m,r}^{(l)}]=0$ for
$k,\,l\in\mathbb Z_{\geq 0}$.
\end{corollary}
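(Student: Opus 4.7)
The plan is to extract an explicit recursion from the Wronski relation of Theorem \ref{wt} and then perform a simple induction on $l$. Isolating the $k=0$ term in \eqref{dhr}, and using that $D_{m,r}^{(0)}=\id$ together with $[K\delta]\neq 0$ (guaranteed by the standing assumption \eqref{gp}), I would divide through to obtain the recursion
\begin{equation*}
H_{m,r}^{(K)}=-\frac{1}{[K\delta]}\sum_{k=1}^{K}[k\kappa+(K-k)\delta]\,D_{m,r}^{(k)}H_{m,r}^{(K-k)},
\end{equation*}
which is exactly \eqref{wre}. Since $H_{m,r}^{(0)}=\id$, an induction on $l$ then immediately shows that $H_{m,r}^{(l)}$ lies in the (associative, not yet known to be commutative) algebra generated by $D_{m,r}^{(1)},\dots,D_{m,r}^{(l)}$: the right-hand side of the recursion is, by the inductive hypothesis, a finite sum of products of $D_{m,r}^{(k)}$'s with $1\leq k\leq l$.

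For the commutator statement, the remaining ingredient is that the operators $D_{m,r}^{(k)}$ commute among themselves. This is not a separate assertion but follows directly from Theorem \ref{ct} by the symmetry built into the definition: by construction, $D_{m,r}^{(k)}$ is obtained from $H_{r,m}^{(k)}$ by the substitution $(x,y,\delta,\kappa)\mapsto (y,x,-\kappa,-\delta)$, and Theorem \ref{ct} is valid for arbitrary values of the parameters $\delta,\kappa$ (subject only to \eqref{gp}, which is symmetric in $\delta\leftrightarrow\kappa$ up to signs). Hence $[D_{m,r}^{(k)},D_{m,r}^{(k')}]=0$ for all $k,k'\in\mathbb Z_{\geq 0}$. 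Combining this with the previous paragraph, $H_{m,r}^{(l)}$ is a polynomial expression in a family of pairwise commuting operators $D_{m,r}^{(j)}$, hence commutes with each $D_{m,r}^{(k)}$.

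I expect no real obstacle: the argument is a routine unwinding of Theorem \ref{wt}, and the only point requiring a brief justification is the self-commutativity of the $D_{m,r}^{(k)}$, which is a direct specialization of Theorem \ref{ct}. One should just take care to note that the inductive argument produces $H_{m,r}^{(l)}$ as a noncommutative polynomial in the $D_{m,r}^{(k)}$ first, and only then invoke their mutual commutativity to conclude $[D_{m,r}^{(k)},H_{m,r}^{(l)}]=0$.
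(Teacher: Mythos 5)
Your proof is correct and follows essentially the same route as the paper: unwind the Wronski relation (Theorem \ref{wt}) into the recursion \eqref{wre}, then conclude by induction. The paper leaves the step ``$H_{m,r}^{(l)}$ is a polynomial in commuting operators, hence commutes with each of them'' implicit, whereas you correctly spell out the one ingredient it uses but does not state, namely that the $D_{m,r}^{(k)}$ mutually commute. Your justification of this (apply Theorem \ref{ct} to $H_{r,m}^{(k)}$ and substitute $(x,y,\delta,\kappa)\mapsto(y,x,-\kappa,-\delta)$, noting that \eqref{gp} is preserved by this substitution since $[\cdot]$ is odd) is exactly the right observation, and your careful ordering of the argument---first obtain $H_{m,r}^{(l)}$ as a possibly noncommutative polynomial in the $D$'s, only afterwards invoke their commutativity---is a clean way to avoid circularity. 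No gaps.
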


In \cite{ns}, the recursion \eqref{wre} for $r=0$ is solved explicitly in terms of determinants. This solution  extends immediately to general $r$.

\begin{corollary}\label{dc}
The operator $ H_{m,r}^{(l)}$ can be expressed in terms of the operators $D_{m,r}^{(k)}$ as
$$  H_{m,r}^{(l)}=(-1)^l\det_{1\leq i,j\leq l}\left(\frac{[(i-j+1)\kappa+(j-1)\delta]}{[i\delta]}\, D_{m,r}^{(i-j+1)}\right),$$
where matrix elements with $i-j+1<0$ are interpreted as zero.
\end{corollary}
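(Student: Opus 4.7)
The plan is to prove the determinantal formula by induction on $l$, using the recursion \eqref{wre}. The base case $l=0$ is trivial, since the empty determinant equals $1 = H_{m,r}^{(0)}$. The inductive step amounts to expanding the determinant along the last row and matching the resulting expansion with the recursion term by term.

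Denote the $l \times l$ matrix in the statement by $M_l$, with entries
$(M_l)_{i,j}=\frac{[(i-j+1)\kappa+(j-1)\delta]}{[i\delta]}\,D_{m,r}^{(i-j+1)}$
and the convention $D_{m,r}^{(n)}=0$ for $n<0$. Since all $D_{m,r}^{(k)}$ commute by Corollary \ref{wc}, $\det M_l$ is well-defined. Expanding along the last row, I would write
$$\det M_l=\sum_{k=1}^{l}(-1)^{k+1}\,\frac{[k\kappa+(l-k)\delta]}{[l\delta]}\,D_{m,r}^{(k)}\cdot\det N_{l,k},$$
where $N_{l,k}$ is the minor obtained by deleting row $l$ and column $l-k+1$ of $M_l$ (the sign $(-1)^{k+1}=(-1)^{l+(l-k+1)}$ comes from the cofactor expansion after substituting $j=l-k+1$).

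The main structural step is to show $\det N_{l,k}=\det M_{l-k}$. The key observation is that $(M_l)_{i,j}=0$ whenever $i-j+1<0$, i.e., $j\geq i+2$. Consequently, for rows $i\in\{1,\ldots,l-k\}$ and the remaining columns $j\in\{l-k+2,\ldots,l\}$ of $N_{l,k}$, all entries vanish. Thus $N_{l,k}$ is block upper-triangular with top-left block equal (by inspection of the formula, which depends only on $(i,j)$) to $M_{l-k}$, and bottom-right block of size $(k-1)\times(k-1)$ whose diagonal entries (corresponding to $i-j+1=0$) all equal $\frac{[(j'+l-k)\delta]}{[(j'+l-k)\delta]}\,\id=\id$ and whose strictly upper-triangular entries vanish. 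Hence $\det N_{l,k}=\det M_{l-k}\cdot \id^{k-1}=\det M_{l-k}$.

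Combining this with the inductive hypothesis $\det M_{l-k}=(-1)^{l-k}H_{m,r}^{(l-k)}$ and \eqref{wre}, one finds
$$\det M_l=(-1)^{l+1}\sum_{k=1}^{l}\frac{[k\kappa+(l-k)\delta]}{[l\delta]}\,D_{m,r}^{(k)}H_{m,r}^{(l-k)}=(-1)^{l+1}\cdot(-H_{m,r}^{(l)})=(-1)^l H_{m,r}^{(l)},$$
which completes the induction. There is no serious obstacle here; the only point requiring a little care is verifying the block-triangular decomposition of the minor $N_{l,k}$ and that all its (shifted) diagonal entries reduce to $\id$.
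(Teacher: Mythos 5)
Your proof is correct, and it takes exactly the natural route: induction on $l$, cofactor expansion of $\det M_l$ along the last row, and identification of the resulting sum with the recursion \eqref{wre}. The paper itself gives no detailed argument here — it simply remarks that the solution of the recursion in \cite{ns} for $r=0$ extends verbatim to general $r$ — so your write-up is more self-contained than the paper's.

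A few small remarks. Your reduction $\det N_{l,k}=\det M_{l-k}$ is correct, but the matrix $N_{l,k}$ you describe is block \emph{lower}-triangular, not upper-triangular: the vanishing block is the top-right one (rows $1,\dots,l-k$, surviving columns $l-k+2,\dots,l$), and the bottom-right $(k-1)\times(k-1)$ block is itself lower-triangular with diagonal entries $\id$. Either way the determinant factors as you claim. You also correctly flag the one subtlety that makes the determinant meaningful at all, namely that the entries lie in the commutative ring generated by the operators $D_{m,r}^{(k)}$ (Corollary \ref{wc}), so the usual Leibniz/cofactor identities apply without worrying about ordering. The sign bookkeeping $(-1)^{k+1}\cdot(-1)^{l-k}=(-1)^{l+1}$ and the final appeal to \eqref{wre} are both correct.
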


Interchanging $\delta$ and $-\kappa$ gives the inverse relation
$$D_{m,r}^{(l)}=(-1)^l\det_{1\leq i,j\leq l}\left(\frac{[(i-j+1)\delta+(j-1)\kappa]}{[i\kappa]}\,H_{m,r}^{(i-j+1)}\right).$$
The  identities in \cite[Prop.\ 1.4]{ns} also extend immediately to our more general operators; we will not reproduce them here.

The fourth family of operators is
$$\hat D_{m,r}^{(k)}=\sum_{\substack{\mu\in\mathbb Z_{\geq 0}^r,\,I\subseteq \langle m\rangle,\\|\mu|+|I|=k}} 
 C_{\mu,I}(y_1+\kappa,\dots,y_r+\kappa;x_1-\delta,\dots,x_m-\delta;\kappa,\delta)\,T_x^{-\delta I}T_y^{\kappa\mu}.$$
It follows from Corollary \ref{wc} that $\hat D_{m,r}^{(k)}$ is a polynomial in
$\hat H_{m,r}^{(l)}$ for $l\leq k$. We can now conclude that all these operators commute.

\begin{corollary}\label{fcc}
For fixed $m$ and $r$, and arbitrary $k_j\in\mathbb Z_{\geq 0}$,
the operators $H_{m,r}^{(k_1)}$, $\hat H_{m,r}^{(k_2)}$, $D_{m,r}^{(k_3)}$ and $\hat D_{m,r}^{(k_4)}$ commute. 
\end{corollary}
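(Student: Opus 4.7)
The plan is to assemble the corollary from the preceding results without any new direct computation. First I would use Corollary \ref{wc}, which expresses $H_{m,r}^{(l)}$ as a polynomial in $\{D_{m,r}^{(k)}\}_{k\leq l}$, together with the inverse relation displayed after Corollary \ref{dc}, which expresses $D_{m,r}^{(l)}$ as a polynomial in $\{H_{m,r}^{(k)}\}_{k\leq l}$. Combined with Theorem \ref{ct}, this shows that $\{H_{m,r}^{(k)}\}_{k\geq 0}$ and $\{D_{m,r}^{(k)}\}_{k\geq 0}$ generate one and the same commutative algebra $\mathcal{A}$.

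Second, I would apply the same reasoning to the hatted operators. The operators $\hat H_{m,r}^{(k)}$ and $\hat D_{m,r}^{(k)}$ are obtained from $H_{m,r}^{(k)}$ and $D_{m,r}^{(k)}$ by a substitution of variables and parameters; as noted just before Theorem \ref{sct}, this substitution preserves commutativity, so the $\hat H_{m,r}^{(k)}$'s commute among themselves. The polynomial identities of Corollary \ref{wc} and its inverse, being valid for generic $(\delta,\kappa)$, are preserved under the same substitution. Hence $\{\hat H_{m,r}^{(k)}\}$ and $\{\hat D_{m,r}^{(k)}\}$ likewise generate one and the same commutative algebra $\hat{\mathcal{A}}$ (this is precisely the statement referred to in the sentence preceding the corollary, extended to the full family).

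Third, I would invoke Theorem \ref{sct}, which states $[H_{m,r}^{(k)},\hat H_{m,r}^{(l)}]=0$ for all $k,l\in\mathbb Z_{\geq 0}$. Since commutativity of two sets of generators extends to all polynomial combinations, every element of $\mathcal{A}$ commutes with every element of $\hat{\mathcal{A}}$. The corollary then follows immediately, since $H_{m,r}^{(k_1)}, D_{m,r}^{(k_3)}\in\mathcal{A}$ and $\hat H_{m,r}^{(k_2)}, \hat D_{m,r}^{(k_4)}\in\hat{\mathcal{A}}$, so all six pairs of cross-commutators vanish.

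I do not anticipate any substantive obstacle: the corollary is a bookkeeping consequence of Theorems \ref{ct} and \ref{sct} together with Corollaries \ref{wc} and \ref{dc}. The only point deserving a brief sanity check is that the change of variables producing $(\hat H,\hat D)$ from $(H,D)$ carries the Wronski-type polynomial relations along with it; this is automatic because those relations are formal identities valid for generic parameter values.
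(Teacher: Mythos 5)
Your argument is correct and is essentially the paper's own: the paper notes (just before the corollary) that $\hat D_{m,r}^{(k)}$ is a polynomial in the $\hat H_{m,r}^{(l)}$ by applying Corollary \ref{wc} to the hatted operators, and then concludes commutativity of all four families from Theorems \ref{ct}, \ref{sct}, and Corollary \ref{wc}. Your write-up merely makes explicit the intermediate bookkeeping that $\{H^{(k)}\}$, $\{D^{(k)}\}$ generate the same commutative algebra, likewise for the hatted versions, and that these two algebras commute element-by-element via Theorem \ref{sct}.
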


Finally, we consider the so called kernel function. 
To this end,
we fix a meromorphic solution $G_\delta$ to the functional equation
\begin{equation}\label{gfe}G_\delta(x+\delta)=[x]G_\delta(x). \end{equation}
In the generic case, $G_\delta$ can be constructed from Ruijsenaars' elliptic gamma function, see \eqref{gd} below.

\begin{theorem}\label{kt}
Assuming that
\begin{equation}\label{kfc}(m-n)\kappa=(r-s)\delta,\end{equation}
the function 
\begin{multline}\label{ph}\Phi^{(m,r,n,s)}(x_1,\dots,x_m;y_1,\dots,y_r;X_1,\dots,X_n;Y_1,\dots,Y_s) \\
=\prod_{\substack{1\leq i\leq m,\\1\leq j\leq n}}\frac{G_\delta(x_i+X_j-\kappa)}{G_\delta(x_i+X_j)} 
\prod_{\substack{1\leq i\leq r,\\1\leq j\leq s}}\frac{G_{-\kappa}(y_i+Y_j+\delta)}{G_{-\kappa}(y_i+Y_j)}\\
\times\prod_{\substack{1\leq i\leq m,\\1\leq j\leq s}}[x_i+Y_j]\prod_{\substack{1\leq i\leq r,\\1\leq j\leq n}}[y_i+X_j]
\end{multline}
 satisfies the kernel function identity 
\begin{equation}\label{kfi}H_{m,r}^{(k)}(x;y)\Phi^{(m,r,n,s)}(x;y;X;Y)=H_{n,s}^{(k)}(X;Y)\Phi^{(m,r,n,s)}(x;y;X;Y), \end{equation}
where we indicate on which variables the difference operators act.
\end{theorem}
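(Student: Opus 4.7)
The plan is to reduce \eqref{kfi} to an explicit identity between two multiple theta-function sums by directly computing the action of $H_{m,r}^{(k)}(x;y)$ and $H_{n,s}^{(k)}(X;Y)$ on $\Phi$. To that end, I would use the functional equation \eqref{gfe}, together with its consequence $G_{-\kappa}(y-\kappa)=[y]\,G_{-\kappa}(y)$ (which is \eqref{gfe} with $\delta$ replaced by $-\kappa$), to evaluate each shift in closed form. For the summand indexed by $(\mu,I)$ on the left, the $G_\delta$-ratios contribute a product $\prod_{i,j}[x_i+X_j-\kappa;\delta]_{\mu_i}/[x_i+X_j;\delta]_{\mu_i}$, the $G_{-\kappa}$-ratios contribute $\prod_{i\in I,\,j}[y_i+Y_j+\delta]/[y_i+Y_j]$, and the two pure theta products in \eqref{ph} give $\prod_{i,j}[x_i+Y_j+\mu_i\delta]/[x_i+Y_j]$ and $\prod_{i\in I,\,j}[y_i+X_j-\kappa]/[y_i+X_j]$. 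Combined with the coefficient $C_{\mu,I}(x;y)$ from \eqref{cm}, this expresses $(H_{m,r}^{(k)}\Phi)/\Phi$ as an explicit theta sum; an analogous computation yields $(H_{n,s}^{(k)}\Phi)/\Phi$ as a sum of the same shape but with the roles of $(x,y)$ and $(X,Y)$ interchanged.

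The remaining task is to prove equality of the two sums. My strategy is to recognize both as specializations of a single source identity of the kind the authors advertise as the technical backbone of the paper, closely related to the multiple elliptic hypergeometric transformations of Kajihara--Noumi \cite{kn}, Langer--Schlosser--Warnaar \cite{lsw} and Rosengren \cite{r2} cited in the introduction. In that picture, the balancing condition $(m-n)\kappa=(r-s)\delta$ should appear as precisely the balancing hypothesis of the underlying transformation, and is exactly what ensures that the various quasi-constant prefactors on the two sides of \eqref{kfi} match up after cancellation.

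As a warm-up I would first verify the two classical extremes $r=s=0$ (the Noumi--Sano kernel identity) and $m=n=0$ (the original Ruijsenaars kernel identity), as well as the mixed first-order case $k=1$ already treated in \cite{ahl}, in order to fix signs, test the telescoping step, and identify which specialization of the source identity reproduces each of these known statements. The main obstacle, I expect, is not the functional-equation bookkeeping or the matching of prefactors, both of which are mechanical once set up, but rather the source identity itself: establishing a theta-function identity flexible enough to cover arbitrary $m,r,n,s$ and arbitrary total degree $k$ is genuine work, and I would expect it to be the same identity (or a close relative) that underlies the proofs of Theorems \ref{ct}, \ref{sct} and \ref{wt}; with that identity in hand, the reduction outlined above turns \eqref{kfi} into a direct specialization.
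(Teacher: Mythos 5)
Your proposal matches the paper's proof in all essentials: the paper computes the ratio $\Phi(x+\delta\mu;y-\kappa I;X;Y)/\Phi(x;y;X;Y)$ exactly as you describe, inserts the coefficients $C_{\mu,I}$, and identifies both sides with a specialization of a balanced multiple elliptic transformation formula (Proposition \ref{ktp}) in which the balancing condition \eqref{bc} reduces to \eqref{kfc}. The ``flexible'' source identity you anticipate is precisely Proposition \ref{ktp}, which the paper obtains from the Kajihara--Noumi identity \eqref{ksni} by the same telescoping specialization of variables used for Proposition \ref{csp} followed by an analytic-continuation argument as in the proof of Theorem \ref{sct}.
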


In particular, \eqref{kfi} holds when $m=n$ and $r=s$. 
However, also exceptional cases when $\kappa/\delta\in\mathbb Q$ may be of interest. 
The so called balancing condition \eqref{kfc} 
 stems from the fact that the sum of the zeroes of an elliptic function (modulo periods) equals the sum of the poles. This condition seems to be unavoidable in the elliptic case, but in the trigonometric case there is 
 a modified version of \eqref{kfi}   without this condition \cite{hlnr}.

\section{Source identities}

Clearly, the commutation relation in Theorem \ref{ct} can be  be translated to an identity involving the coefficients \eqref{cm}. This is also the case for Theorem \ref{sct}, Theorem \ref{wt} and Theorem \ref{kt}. We refer to these scalar equations for the coefficients as 
 \emph{source identities} for the corresponding facts about operators.  It turns out that 
 the operator identities  can be obtained from the
 same source identities as in the non-deformed case ($r=0$), but with  the variables
 specialized in a  non-obvious way.
 
 Theorem \ref{ct} and Theorem \ref{sct} will both be derived from the source identity \cite[Thm.\ A.2]{r}
\begin{subequations}\label{si}
\begin{multline}\label{rsi}\sum_{I\subseteq \langle n\rangle,\,|I|=k}\, \prod_{i\in I,\,j\in I^c}\frac{[z_i-z_j-a][z_i-z_j-b]}{[z_i-z_j][z_i-z_j-a-b]}\\
=\sum_{I\subseteq \langle n\rangle,\,|I|=n-k} \,\prod_{i\in I,\,j\in I^c}\frac{[z_i-z_j-a][z_i-z_j-b]}{[z_i-z_j][z_i-z_j-a-b]}.
\end{multline}
Ruijsenaars  used this identity to prove commutativity of the operators \eqref{ro}. 
In the case of
Theorem \ref{sct} we need to combine \eqref{si} with an argument of analytic continuation.
Incidentally, this  leads to a new proof of an elliptic hypergeometric transformation formula due to Langer, Schlosser and Warnaar \cite{lsw}.

For the Wronski relation \eqref{dhr}, the source identity is the same as the one used by Noumi and Sano \cite{ns} in the case $r=0$, that is,
\begin{equation}\label{nssi}\sum_{I\subseteq \langle n\rangle}(-1)^{|I|}\frac{[|z|-|w|+|I|a]}{[|z|-|w|]}
\prod_{i\in  I,\,j\in I^c}\frac{[z_i-z_j+a]}{[z_i-z_j]}\prod_{i\in I,\,j\in\langle n\rangle}
\frac{[z_i-w_j]}{[z_i-w_j+a]}=0.
 \end{equation}
 Here,  the notation $|z|=\sum_j z_j$ is used also for complex vectors.

Finally, the  kernel function identity \eqref{kfi} will be obtained from
the Kajihara--Noumi identity \cite{kn}
\begin{multline}\label{ksni}\sum_{I\subseteq\langle n\rangle,\,|I|=k}\prod_{i\in  I,\,j\in I^c}\frac{[z_i-z_j-a]}{[z_i-z_j]}\prod_{i\in I,\,j\in\langle n\rangle}
\frac{[z_i+w_j+a]}{[z_i+w_j]}\\
=\sum_{I\subseteq\langle n\rangle,\,|I|=k}\prod_{i\in  I,\,j\in I^c}\frac{[w_i-w_j-a]}{[w_i-w_j]}\prod_{i\in I,\,j\in\langle n\rangle}
\frac{[w_i+z_j+a]}{[w_i+z_j]}.
 \end{multline}
 The same identity was used by Ruijsenaars \cite{ru3} to prove the non-deformed case ($m=n=0$) of 
 \eqref{kfi}.
 Just as for Theorem \ref{sct}, it must in the general case be combined with an analytic continuation argument.

 \end{subequations}

Although the three source identities \eqref{si} may look similar at first glance,  none of them seem to follow easily from the others. 
Both
\eqref{nssi} and \eqref{ksni} can be derived as consequences of the  Frobenius determinant evaluation \cite{f}
$$\det_{1\leq i,j\leq n}\left(\frac{[\lambda+z_i+w_j]}{[\lambda][z_i+w_j]}\right)
=\frac{[\lambda+|z|+|w|]\prod_{1\leq i<j\leq n}[z_i-z_j][w_i-w_j]}{[\lambda]\prod_{1\leq i,j\leq n}[z_i+w_j]}.$$
However, we are not aware of an analogous proof of \eqref{rsi}.

\section{Commutativity}\label{cs}

In this section we prove Theorem \ref{ct}.
Consider  the product
\begin{equation}\label{hkl} H_{m,r}^{(k)}H_{m,r}^{(l)}=\sum_{\substack{\mu,\,\nu\in\mathbb Z_{\geq 0}^m,\,I,\,J\subseteq\langle r\rangle,\\|\mu|+|I|=k,\,|\nu|+|J|=l}}
C_{\mu,I}(x;y) C_{\nu,J}(x+\delta\mu;y-\kappa J) \,T_{x}^{\delta(\mu+\nu)}
T_y^{-\kappa(I+J)}.
\end{equation}
Here, $I+J$ means the sum of the corresponding sequences in $\{0,1\}^r$. 
It will be convenient to introduce the sets
$$ K=I\cap J,\quad L=I\triangle J,\quad M=\langle r\rangle\setminus (I\cup J), \quad
P=I\setminus J,\quad Q=J\setminus I,$$
where $\triangle$  denotes symmetric difference.
We then have the disjoint unions
$$\langle r\rangle=K\sqcup L\sqcup M, \qquad L=P\sqcup Q.$$
 Substituting $\nu\mapsto\lambda-\mu$, \eqref{hkl} takes the form
$$ H_{m,r}^{(k)}H_{m,r}^{(l)}=\sum_{\substack{\lambda\in\mathbb Z_{\geq 0}^m,\,K,\,L\subseteq \langle r\rangle,\\K\cap L=\emptyset,\,|\lambda|+2|K|+|L|=k+l}}
S_{k}(x;y)\,
 T_{x}^{\delta\lambda}
T_y^{-\kappa(2K+L)},
$$
where
\begin{equation}\label{sk}S_{k}(x;y)=\sum_{\substack{0\leq \mu_j\leq\lambda_j,\,1\leq j\leq m,\\
P\sqcup Q= L,\, |\mu|+|P|=k-|K|}}
C_{\mu,K\cup P}(x;y) C_{\lambda-\mu,K\cup Q}(x+\delta\mu;y-\kappa (K\cup P)).\end{equation}
Hence, the commutativity is equivalent to the symmetry
\begin{equation}\label{ss}S_{k}(x;y)=S_{|\lambda|+2|K|+|L|-k}(x;y), 
\end{equation}
for  fixed $\lambda$, $K$ and $L$.

We now insert the expression \eqref{cm} into  \eqref{sk}. Consider first the  factors involving only $y$-variables. They have the form
$$\prod_{i\in K\sqcup P,\,j\in M\sqcup Q}
\frac{[y_i-y_j-\delta]}{[y_i-y_j]} 
\prod_{t\in K\sqcup Q,\,u\in M\sqcup P}\frac{[y_t-y_u+\varepsilon_{t,u}\kappa-\delta]}{[y_t-y_u+\varepsilon_{t,u}\kappa]},
 $$
 where
 $$\varepsilon_{t,u}=(K\sqcup P)_u-(K\sqcup P)_t=\begin{cases} 1, & t\in Q,\,u\in P,\\
 -1, & t\in K, u\in M,\\ 
 0, & \text{else}.\end{cases} $$
The factors with $(i,j)\in P\times Q$ and $(t,u)\in Q\times P$ can be combined as
\begin{equation}\label{xfa}\prod_{i\in P,\,j\in Q}
\frac{[y_i-y_j-\delta][y_i-y_j-\kappa+\delta]}{[y_i-y_j][y_i-y_j-\kappa]}
\end{equation}
and
the remaining factors can be written
\begin{multline}\label{xfb} \prod_{i\in K,\,j\in M}
\frac{[y_i-y_j-\delta][y_i-y_j-\kappa-\delta]}{[y_i-y_j][y_i-y_j-\kappa]}\\
\times\prod_{i\in L,\,j\in M} \frac{[y_i-y_j-\delta]}{[y_i-y_j]}
\prod_{i\in K,\,j\in L} \frac{[y_i-y_j-\delta]}{[y_i-y_j]},
\end{multline}
For our purpose, the only relevant factors are \eqref{xfa}, since
\eqref{xfb} can be cancelled from \eqref{ss}.

The factors in \eqref{sk} involving only $x$-variables can be expressed as
\begin{multline*}\prod_{1\leq i<j\leq m}\frac{[x_i-x_j+(\mu_i-\mu_j)\delta]}{[x_i-x_j]}
\prod_{i,j=1}^m\frac{[x_i-x_j+\kappa]_{\mu_i}}{[x_i-x_j+\delta]_{\mu_i}} \\
\begin{split}&\quad\times
\prod_{1\leq i<j\leq m}\frac{[x_i-x_j+(\lambda_i-\lambda_j)\delta]}{[x_i-x_j+(\mu_i-\mu_j)\delta]}
\prod_{i,j=1}^m\frac{[x_i-x_j+(\mu_i-\mu_j)\delta+\kappa]_{\mu_i}}{[x_i-x_j+(\mu_i-\mu_j+1)\delta]_{\mu_i}}\\
&=\prod_{1\leq i<j\leq m}\frac{[x_i-x_j+(\lambda_i-\lambda_j)\delta]}{[x_i-x_j]}
\prod_{i,j=1}^m\frac{[x_i-x_j+\kappa]_{\lambda_i}}{[x_i-x_j+\delta]_{\lambda_i}}\\
&\quad\times\prod_{i,j=1}^m\frac{[x_i-x_j+\delta]_{\mu_i-\mu_j}[x_i-x_j+\kappa]_{\mu_i}[x_i-x_j-\lambda_j\delta]_{\mu_i}}{[x_i-x_j+\kappa]_{\mu_i-\mu_j}[x_i-x_j+\delta]_{\mu_i}[x_i-x_j-(\lambda_j-1)\delta-\kappa]_{\mu_i}},
\end{split}\end{multline*}
where the first two double products can be cancelled from \eqref{ss}.

Finally, the factors that mix $x$-variables and $y$-variables are
\begin{multline*}
\prod_{i=1}^m\left(\prod_{j\in K\sqcup P}\frac{[x_i-y_j-\kappa]}{[x_i-y_j+\mu_i\delta]}\prod_{j\in M\sqcup Q}\frac{[x_i-y_j-\delta]}{[x_i-y_j+(\mu_i-1)\delta]}\right)\\
\times \prod_{i=1}^m\left(\prod_{j\in K\sqcup Q}\frac{[x_i-y_j+\mu_i\delta-Q_j\kappa]}{[x_i-y_j+\lambda_i\delta+K_j\kappa]}\prod_{j\in M\sqcup P}\frac{[x_i-y_j+(\mu_i-1)\delta+P_j\kappa ]}{[x_i-y_j+(\lambda_i-1)\delta+P_j\kappa ]}\right).
\end{multline*}
Here, all factors with $j\in K\sqcup M$ can be cancelled from \eqref{ss}. The remaining factors  can be written
\begin{multline*} \prod_{i=1}^m\left(\prod_{j\in P}\frac{[x_i-y_j-\kappa][x_i-y_j+(\mu_i-1)\delta+\kappa ]}{[x_i-y_j+\mu_i\delta][x_i-y_j+(\lambda_i-1)\delta+\kappa ]}\right.\\
\times\left.\prod_{j\in Q}\frac{[x_i-y_j-\delta][x_i-y_j+\mu_i\delta-\kappa]}{[x_i-y_j+(\mu_i-1)\delta][x_i-y_j+\lambda_i\delta]}\right).\end{multline*}
From this expression, we factor out 
$$\prod_{i=1}^m\prod_{j\in P\sqcup Q}\frac{[x_i-y_j-\kappa]}{[x_i-y_j+\lambda_i\delta]}, $$
which can again be cancelled from \eqref{ss}, and are left with
\begin{multline*} \prod_{i=1}^m\left(\prod_{j\in P}\frac{[x_i-y_j+\lambda_i\delta][x_i-y_j+(\mu_i-1)\delta+\kappa ]}{[x_i-y_j+\mu_i\delta][x_i-y_j+(\lambda_i-1)\delta+\kappa ]}\right.\\
\times\left.\prod_{j\in Q}\frac{[x_i-y_j-\delta][x_i-y_j+\mu_i\delta-\kappa]}{[x_i-y_j+(\mu_i-1)\delta][x_i-y_j-\kappa]}\right).\end{multline*}

To summarize, to prove Theorem \ref{ct} it is enough to verify that \eqref{ss} holds with
\begin{align*}
S_k&=\sum_{\substack{0\leq \mu_j\leq\lambda_j,\,1\leq j\leq m,\\
P\sqcup Q= L,\, |\mu|+|P|=k-|K|}}\prod_{i\in P,\,j\in Q}
\frac{[y_i-y_j-\delta][y_i-y_j-\kappa+\delta]}{[y_i-y_j][y_i-y_j-\kappa]}\\
&\quad\times\prod_{i,j=1}^m\frac{[x_i-x_j+\delta]_{\mu_i-\mu_j}[x_i-x_j+\kappa]_{\mu_i}[x_i-x_j-\lambda_j\delta]_{\mu_i}}{[x_i-x_j+\kappa]_{\mu_i-\mu_j}[x_i-x_j+\delta]_{\mu_i}[x_i-x_j-(\lambda_j-1)\delta-\kappa]_{\mu_i}}\\
&\quad\times \prod_{i=1}^m\left(\prod_{j\in P}\frac{[x_i-y_j+\lambda_i\delta][x_i-y_j+(\mu_i-1)\delta+\kappa ]}{[x_i-y_j+\mu_i\delta][x_i-y_j+(\lambda_i-1)\delta+\kappa ]}\right.\\
&\quad\times\left.\prod_{j\in Q}\frac{[x_i-y_j-\delta][x_i-y_j+\mu_i\delta-\kappa]}{[x_i-y_j+(\mu_i-1)\delta][x_i-y_j-\kappa]}\right)
\end{align*}
(which differs from \eqref{sk} by a factor independent of $k$). 
It is enough to do this for $L=\langle r\rangle$, since the general case then follows by
changing the variables $\{y_1,\dots,y_r\}$ to 
 $\{y_j\}_{j\in L}$. We have thus reduced Theorem \ref{ct} to the following identity.

\begin{proposition}\label{csp}
For $\lambda\in\mathbb Z_{\geq 0}^m$, let
\begin{align*}
 S_k&=\sum_{\substack{0\leq\mu_j\leq \lambda_j,\,1\leq j\leq m\\P\subseteq \langle r\rangle,\,|\mu|+|P|=k}} \,
\prod_{i\in P,\,j\in P^c}\frac{[y_i-y_j-\delta][y_i-y_j+\delta-\kappa]}{[y_i-y_j][y_i-y_j-\kappa]}\\
 &\quad\times\prod_{i,j=1}^m\left(\frac{[x_i-x_j+\delta]_{\mu_i-\mu_j}}{[x_i-x_j+\kappa]_{\mu_i-\mu_j}}\frac{[x_i-x_j+\kappa]_{\mu_i}[x_i-x_j-\lambda_j\delta]_{\mu_i}}{[x_i-x_j+\delta]_{\mu_i}[x_i-x_j-(\lambda_j-1)\delta-\kappa]_{\mu_i}}\right)\\
 &\quad\times\prod_{i=1}^m\left(
\prod_{j\in P}\frac{[x_i-y_j+\lambda_i\delta][x_i-y_j+(\mu_i-1)\delta+\kappa]}{[x_i-y_j+\mu_i\delta][x_i-y_j+(\lambda_i-1)\delta+\kappa]}\right.\\
&\quad\times\left.\prod_{j\in P^c}\frac{[x_i-y_j-\delta][x_i-y_j+\mu_i\delta-\kappa]}{[x_i-y_j-\kappa][x_i-y_j+(\mu_i-1)\delta]}\right).
\end{align*}
Then, $S_k=S_{|\lambda|+r-k}$.
\end{proposition}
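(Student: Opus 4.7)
The plan is to derive Proposition \ref{csp} from Ruijsenaars' source identity \eqref{rsi} via a specialization that collapses its sum over $n$-element subsets to the sum over pairs $(\mu, P)$ parametrizing $S_k$. Once this is done, the symmetry $|I| = k \leftrightarrow |I| = n-k$ built into \eqref{rsi} immediately yields $S_k = S_{|\lambda|+r-k}$.

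Set $n = |\lambda| + r$ and index a ground set of that size as
\[E = \{(j,l) : 1 \le j \le m,\; 0 \le l \le \lambda_j - 1\} \;\sqcup\; \langle r \rangle,\]
so that the $j$-th column contributes $\lambda_j$ elements and there are $r$ further elements in the $y$-block. In \eqref{rsi} take $a = \delta$ and $b = \kappa - \delta$ (so $a + b = \kappa$) and specialize $z_{j,l} = x_j + l\delta$ for column elements and $z_j = y_j$ for the $y$-block. Then the factor $[z_i - z_j - \delta]$ in the numerator of the $(i,j)$-contribution vanishes precisely when $i = (j_0, l+1)$ and $j = (j_0, l)$ lie in the same column. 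Hence only subsets $I \subseteq E$ whose intersection with each column is an initial segment $\{(j_0, 0), \ldots, (j_0, \mu_{j_0} - 1)\}$ contribute on either side of \eqref{rsi}. Such $I$ are in bijection with pairs $(\mu, P)$ via $\mu_{j_0} = |I \cap \text{column }j_0|$ and $P = I \cap \langle r\rangle$, with $|I| = |\mu| + |P|$. Thus the LHS of \eqref{rsi} reduces to the $(\mu, P)$-sum in $S_k$ and the RHS to that in $S_{|\lambda|+r-k}$.

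What remains is to verify that each specialized summand equals the corresponding summand of $S_k$. The contributions from $(i,j) \in I \times I^c$ split into four types: (i) both in the $y$-block, directly yielding the $y$-$y$ factor of $S_k$; (ii) one in a column and one in the $y$-block, which telescope in the column index $l$ to produce the mixed factors indexed by $j \in P$ or $j \in P^c$; (iii) both in a single column $j_0$, which telescope in both $l$ and $l'$ (and simplify via $[-x] = -[x]$) to yield the diagonal $i = j$ factor
\[\frac{[\kappa]_{\mu_{j_0}}\,[-\lambda_{j_0}\delta]_{\mu_{j_0}}}{[\delta]_{\mu_{j_0}}\,[-(\lambda_{j_0}-1)\delta - \kappa]_{\mu_{j_0}}}\]
of the $x$-part; and (iv) pairs across two distinct columns, whose $(j_0, j_0')$ and $(j_0', j_0)$ orderings must be combined to reproduce the remaining off-diagonal Pochhammer ratios. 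The main obstacle is the bookkeeping in (iv): each telescoping is elementary and uses only the definition of $[x]_k$ together with $[-x] = -[x]$, but one must carefully combine the two orderings and convert negative subscripts (via $[x]_{-k} = 1/[x-k\delta]_k$) to match $\frac{[x_i - x_j + \delta]_{\mu_i - \mu_j}}{[x_i - x_j + \kappa]_{\mu_i - \mu_j}}$ times the remaining factors of the $x$-part. The specialization is non-singular for generic $x$, $y$, $\delta$, $\kappa$ under the standing assumption \eqref{gp}, and the identity extends to all allowed parameters by analyticity.
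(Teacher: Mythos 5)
Your proposal is correct and is essentially the paper's own proof: the same source identity \eqref{rsi} with $a=\delta$, $b=\kappa-\delta$, the same specialization of the $z$-variables into $\lambda_j$-long columns plus the $y$-block, the same observation that only initial-segment subsets survive, and the same telescoping computations. The only difference is organizational: where you split the $x$-$x$ contributions into same-column (iii) and cross-column (iv) cases, the paper handles them in a single double product $F_1(c)=\prod_{i,j=1}^m\prod_{k,l}$ whose $l$-product telescopes immediately, so the "bookkeeping obstacle" you flag in (iv) resolves more cleanly than you anticipate and no separate case analysis is needed.
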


As we explain  in \S \ref{scs},
 Proposition \ref{csp} is closely related to an elliptic hypergeometric transformation formula due to Langer, Schlosser and Warnaar \cite{lsw}.

\begin{proof} 
Consider \eqref{rsi} with $a=\delta$, $b=\kappa-\delta$ and
 \begin{multline}\label{zs}(z_1,\dots,z_n)
 = \big(x_1,x_1+\delta,\dots,x_1+(\lambda_1-1)\delta,\dots,\\
x_m,x_m+\delta,\dots,x_m+(\lambda_m-1)\delta,
y_1,\dots,y_r\big),
 \end{multline}
 where $n=|\lambda|+r$. 
 As usual, we identify sets $I\subseteq \langle n\rangle$ with sequences in $\{0,1\}^n$. 
 We claim that, to give
 a non-zero contribution to the sums in \eqref{rsi},  $I$ has to be of the form
 \begin{equation*}I=(\underbrace{1,\dots,1}_{\mu_1},\underbrace{0,\dots,0}_{\lambda_1-\mu_1},\dots,\underbrace{1,\dots,1}_{\mu_m},\underbrace{0,\dots,0}_{\lambda_m-\mu_m},P), \end{equation*}
 where $0\leq\mu_k\leq \lambda_k$ for each $k$ and $P\subseteq \langle r\rangle$. 
 Otherwise,  there is an index $k$ such that
  $z_{k+1}=z_k+\delta$, 
 $k\notin I$ and $k+1\in I$. Then, the corresponding term in \eqref{rsi} contains the factor $[z_{k+1}-z_{k}-\delta]=0$. 
 
 The general term  in \eqref{rsi} can be written
 $F(\delta)/F(\kappa)$,
 where
 $$F(c)=\prod_{i\in I,\,j\in I^c}\frac{[z_i-z_j-c]}{[z_i-z_j-c+\delta]}. $$
Specializing $z$ as in \eqref{zs}, $F(c)$ splits naturally  into four parts, depending on whether $z_i$ and $z_j$ are specialized to shifted $x$-variables or to $y$-variables. 
The first part is 
\begin{align*}F_1(c)&=\prod_{i,j=1}^m\prod_{\substack{1\leq k\leq \mu_i,\\ \mu_{j}+1\leq l\leq\lambda_j}}\frac{[x_i-x_j+(k-l)\delta-c]}{[x_i-x_j+(k-l+1)\delta-c]}\\
&=\prod_{i,j=1}^m\prod_{k=1}^{\mu_i}\frac{[x_i-x_j+(k-\lambda_j)\delta-c]}{[x_i-x_j+(k-\mu_j)\delta-c]}=\prod_{i,j=1}^m\frac{[x_i-x_j+(1-\lambda_j)\delta-c]_{\mu_i}}{[x_i-x_j+(1-\mu_j)\delta-c]_{\mu_i}},
 \end{align*}
 where we used that the product in $l$ telescopes. Using that
 $$[x_i-x_j+(1-\mu_j)\delta-c]_{\mu_i}
=(-1)^{\mu_i}\frac{[x_j-x_i+c]_{\mu_j}}{[x_j-x_i+c]_{\mu_j-\mu_i}},
 $$
 we obtain
 \begin{subequations}\label{fi}
 \begin{equation}F_1(c)=(-1)^{m|\mu|} \prod_{i,j=1}^m\frac{[x_i-x_j+c]_{\mu_i-\mu_j}[x_i-x_j+(1-\lambda_j)\delta-c]_{\mu_i}}{[x_i-x_j+c]_{\mu_i}}.\end{equation}
 
 The second part of the product, when $z_i$ is specialized to a shifted $x$-variable and $z_j$ to a $y$-variable, can be written
 \begin{equation}F_2(c)=\prod_{i=1}^m\prod_{j\in P^c}\prod_{k=1}^{\mu_i}
 \frac{[x_i+(k-1)\delta-y_j-c]}{[x_i+k\delta-y_j-c]}=\prod_{i=1}^m\prod_{j\in P^c}
 \frac{[x_i-y_j-c]}{[x_i-y_j+\mu_i\delta-c]}
  \end{equation}
and  similarly the third part is
\begin{equation}F_3(c)=\prod_{i=1}^m\prod_{j\in P}\prod_{k=\mu_i+1}^{\lambda_i}
 \frac{[x_i+(k-1)\delta-y_j+c]}{[x_i+(k-2)\delta-y_j+c]}=\prod_{i=1}^m\prod_{j\in P}
 \frac{[x_i-y_j+(\lambda_i-1)\delta+c]}{[x_i-y_j+(\mu_i-1)\delta+c]}. \end{equation}
Finally, the last part is simply
\begin{equation}F_4(c)=\prod_{i\in P,\,j\in P^c}\frac{[y_i-y_j-c]}{[y_i-y_j+\delta-c]}. 
\end{equation}
\end{subequations}

The general term of the sums in \eqref{rsi} is
$$\frac{F_1(\delta)F_2(\delta)F_3(\delta)F_4(\delta)}{F_1(\kappa)F_2(\kappa)F_3(\kappa)F_4(\kappa)}. $$
Inserting the explicit expressions \eqref{fi} yields  the desired result.
\end{proof}

\section{Algebraic independence}

We will now prove Theorem \ref{ait}. We will first 
give a proof of algebraic independence in the special case $\kappa=\delta$, and then deduce the general case. 

\begin{lemma}\label{sil}
For $\kappa=\delta$, the operators $H_{m,r}^{(k)}$, $k=1,\dots,m+r$, are algebraically independent.
\end{lemma}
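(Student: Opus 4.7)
The plan is to exploit the simplifications available at $\kappa=\delta$ to extract algebraically independent information from a carefully chosen family of leading shift coefficients of the $H_{m,r}^{(k)}$. When $\kappa=\delta$, the factor $\prod_{i,j}[x_i-x_j+\kappa]_{\mu_i}/[x_i-x_j+\delta]_{\mu_i}$ in $C_{\mu,I}$ becomes identically $1$ and $[x_i-y_j-\kappa]=[x_i-y_j-\delta]$, so many factors cancel. I would then focus on the shift monomial $T_{x_1}^{k\delta}$ inside $H_{m,r}^{(k)}$: since $\mu\in\mathbb{Z}_{\geq 0}^m$, this shift arises uniquely from $(\mu,I)=(ke_1,\emptyset)$, and its coefficient is
$$f_k(x;y)=\prod_{j=2}^{m}\frac{[x_1-x_j+k\delta]}{[x_1-x_j]}\cdot\prod_{j=1}^{r}\frac{[x_1-y_j-\delta]}{[x_1-y_j+(k-1)\delta]}.$$

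To rule out a polynomial relation $P(H_{m,r}^{(1)},\dots,H_{m,r}^{(m+r)})=0$, I would grade operators by their total shift weight $|\mu|+|I|$; this grading is additive under composition, so $P$ may be assumed quasi-homogeneous of some degree $N$ (with $H_{m,r}^{(k)}$ carrying weight $k$). The coefficient of $T_{x_1}^{N\delta}$ in a monomial $H_{m,r}^{(k_1)}\cdots H_{m,r}^{(k_p)}$ with $\sum_\ell k_\ell=N$ comes uniquely from the choice $(\mu^{(\ell)},I^{(\ell)})=(k_\ell e_1,\emptyset)$ in each factor, and equals
$$\prod_{\ell=1}^{p} f_{k_\ell}\bigl(x_1+(k_1+\dots+k_{\ell-1})\delta,\,x_2,\dots,x_m;\,y\bigr).$$
The vanishing of this $T_{x_1}^{N\delta}$-coefficient therefore reduces the operator identity to a meromorphic identity among shifted copies of the $f_k$'s, from which I aim to conclude that every coefficient of $P$ must be zero.

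For $r\geq 1$, each $f_k$ has a simple pole in $x_1$ at $x_1=y_1-(k-1)\delta$, and these pole positions move with $k$; shifting by $c\delta$ translates the pole further along an arithmetic progression inside $y_1-\delta\mathbb{Z}$. With $x_2,\dots,x_m$ and $y_2,\dots,y_r$ in sufficiently generic position, reading off the highest-order poles along this progression separates the contributions of the various monomials of $P$ and forces their coefficients to vanish one by one. The case $r=0$ reduces to algebraic independence of the Noumi--Sano operators $H_m^{(k)}$, a classical fact extractable from \cite{ns,r}.

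The most delicate step is the final pole-separation argument, because different orderings of the same partition $(k_1,\dots,k_p)$ can produce overlapping partial-sum shifts, so the pole pattern along $y_1-\delta\mathbb{Z}$ is not immediately injective in the monomial index. One must therefore choose the generic specialization of $y$-variables and perform the separation by, e.g., lexicographic induction on $(k_1,\dots,k_p)$. A potentially cleaner alternative would be to pass to the associated graded operator algebra with respect to the shift-weight filtration: since composition is compatible with this filtration, it suffices to prove algebraic independence of the homogeneous principal symbols of the $H_{m,r}^{(k)}$ in the formal shift variables $\xi_1,\dots,\xi_m,\eta_1,\dots,\eta_r$, and for $r\geq 1$ the moving-pole structure of $f_k$ already distinguishes them over the field of meromorphic coefficient functions.
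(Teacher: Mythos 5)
Your starting observation is correct and matches the paper's: at $\kappa=\delta$ many factors in $C_{\mu,I}$ cancel, and the coefficient of $T_{x_1}^{k\delta}$ is indeed the $f_k$ you wrote down. However, the central step of your plan---detecting a polynomial relation by reading off the $T_{x_1}^{N\delta}$-coefficient and separating the contributions of different monomials via poles along $y_1-\delta\mathbb{Z}$---cannot work, because that coefficient is literally the \emph{same} for every monomial $H_{m,r}^{(k_1)}\cdots H_{m,r}^{(k_p)}$ with $\sum k_\ell=N$. Writing $c_\ell=k_1+\cdots+k_{\ell-1}$ (so $c_1=0$, $c_{p+1}=N$), the product you propose to analyze telescopes completely: for each $j\geq 2$,
\begin{equation*}
\prod_{\ell=1}^{p}\frac{[x_1-x_j+c_{\ell+1}\delta]}{[x_1-x_j+c_\ell\delta]}=\frac{[x_1-x_j+N\delta]}{[x_1-x_j]},
\end{equation*}
and for each $j\leq r$,
\begin{equation*}
\prod_{\ell=1}^{p}\frac{[x_1-y_j+(c_\ell-1)\delta]}{[x_1-y_j+(c_{\ell+1}-1)\delta]}=\frac{[x_1-y_j-\delta]}{[x_1-y_j+(N-1)\delta]}.
\end{equation*}
So the $T_{x_1}^{N\delta}$-coefficient of any such monomial equals $f_N(x;y)$, independently of $(k_1,\dots,k_p)$, and the "moving poles" you want to exploit all cancel pairwise. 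Extracting this coefficient only tells you that the sum of the scalar coefficients of $P$ in each quasi-homogeneous degree is zero, which is far from enough. The same telescoping shows why the proposed "associated graded / principal symbol'' variant does not help either: the shift degree is already a grading (each $H_{m,r}^{(k)}$ is homogeneous of degree $k$), so there is no nontrivial symbol to pass to, and the symbol in the formal variable $\xi_1$ alone is the same for all degree-$N$ monomials.

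What the telescoping is really telling you is that at $\kappa=\delta$ the operator $H_{m,r}^{(k)}$ is a conjugate $F^{-1}\bigl(\sum_{|\mu|+|I|=k}(-1)^{|I|}T_x^{\delta\mu}T_y^{-\delta I}\bigr)F$ by an explicit function $F(x;y)$, i.e.\ a pure constant-coefficient shift operator up to conjugation; the $T_{x_1}^{N\delta}$-coefficient of any product is then forced to be $F^{-1}(x;y)F(x+N\delta e_1;y)$. The paper's proof uses exactly this conjugation and then reduces the problem to the algebraic independence of the symmetric polynomials $h_{m,r}^{(k)}(\xi;\eta)=\sum_{i+j=k}(-1)^j h_i(\xi)e_j(\eta)$ in $\mathbb{C}[\xi_1,\dots,\xi_m,\eta_1,\dots,\eta_r]^{S_m\times S_r}$, which is settled by identifying distinct leading monomials under a lexicographic term order on $e_1(\eta),\dots,e_r(\eta),e_1(\xi),\dots,e_m(\xi)$. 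To rescue your approach you would have to extract coefficients of shift monomials genuinely spread over several $x$- and $y$-variables (so that no telescoping occurs), which essentially recovers the symmetric-polynomial reduction.
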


\begin{proof}
It is easy to check that
$$ C_{\mu,I}(x;y)\Big|_{\kappa=\delta}=(-1)^{|I|}F(x;y)^{-1} T_x^{\delta\mu} T_y^{-\delta I} F(x;y),$$
where
$$F(x;y)=\frac{\prod_{1\leq i<j\leq m}[x_i-x_j]\prod_{1\leq i<j\leq r}[y_i-y_j]}{\prod_{1\leq i\leq m,\,1\leq j\leq r}[x_i-y_j-\delta]}. $$
It follows that
$$H_{m,r}^{(k)}(x;y)\Big|_{\kappa=\delta}=F(x;y)^{-1}\sum_{\substack{\mu\in\mathbb Z_{\geq 0}^m,\,I\subseteq \langle r\rangle,\\|\mu|+|I|=k}} 
(-1)^{|I|} T_x^{\delta\mu} T_y^{-\delta I} F(x;y).$$
Thus, it is enough to prove algebraic independence of the  operators
$$\sum_{\substack{\mu\in\mathbb Z_{\geq 0}^m,\,I\subseteq \langle r\rangle,\\|\mu|+|I|=k}} 
(-1)^{|I|} T_x^{\delta\mu} T_y^{-\delta I}
=\sum_{\substack{i\geq 0,\,0\leq j\leq r,\\ i+j=k}}(-1)^j h_i(T_x^{\delta})
e_j(T_y^{-\delta}),\qquad 1\leq k\leq m+r,
$$
 where $h_i$ and $e_j$ denote homogeneous and elementary  symmetric polynomials, respectively. 
This is in turn equivalent to algebraic independence of the polynomials
$$h_{m,r}^{(k)}(\xi;\eta)=\sum_{\substack{i\geq 0,\,0\leq j\leq r,\\ i+j=k}}(-1)^j h_i(\xi)
e_j(\eta),\qquad 1\leq k\leq m+r, $$
which belong to the ring 
$$\mathbb C[\xi_1,\dots,\xi_m,\eta_1,\dots,\eta_r]^{\mathrm S_m\times\mathrm S_r}
=\mathbb C[e_1(\xi),\dots,e_m(\xi),e_1(\eta),\dots,e_r(\eta)].$$
We introduce the total order on the monomials 
$$e_r(\eta)^{l_r}\dotsm e_1(\eta)^{l_1}e_m(\xi)^{k_m}\dotsm e_1(\xi)^{k_1} $$
corresponding to lexicographic order of the multi-indices $(l_r,\dots,l_1,k_m,\dots,k_1)$.

It is well-known that
$$h_i(\xi)=(-1)^{i-1}e_i(\xi)+\text{lower order terms},\qquad 1\leq i\leq m. $$
Hence,  for $1\leq k\leq r$ we have
$$h_{m,r}^{(k)}(\xi;\eta)=(-1)^k e_k(\eta)+\text{lower order terms}, $$
whereas for $r+1\leq k\leq m+r$,
\begin{align*}h_{m,r}^{(k)}(\xi;\eta)&=(-1)^r h_{k-r}(\xi) e_r(\eta)+\text{lower order terms}\\
&=(-1)^{k+1} e_{k-r}(\xi)e_r(\eta)+\text{lower order terms}.
 \end{align*}
It follows that  the polynomial
\begin{equation}\label{pih}h_{m,r}^{(1)}(\xi;\eta)^{\lambda_1}\dotsm  h_{m,r}^{(m+r)}(\xi;\eta)^{\lambda_{m+r}},\qquad \lambda_1,\dots,\lambda_{m+r}\in\mathbb Z_{\geq 0}
\end{equation} 
has leading term 
$$ e_1(\eta)^{\lambda_1}\dotsm e_{r-1}(\eta)^{\lambda_{r-1}} e_r(\eta)^{\lambda_r+\lambda_{r+1}+\dots+\lambda_{m+r}}e_{1}(\xi)^{\lambda_{r+1}}\dotsm e_m(\xi)^{\lambda_{m+r}}, 
$$
up to an irrelevant sign factor.  Since these terms are all distinct, the polynomials
\eqref{pih} are linearly independent. Equivalently, $h_{m,r}^{(k)}$ are algebraically independent for $1\leq k\leq m+r$. 
 \end{proof}

We will now prove Theorem \ref{ait}. Algebraic independence is equivalent to linear independence of the operators
\begin{equation}\label{mh}H_{m,r}^\lambda=(H_{m,r}^{(1)})^{\lambda_1}\dotsm (H_{m,r}^{(m+r)})^{\lambda_{m+r}},\qquad \lambda\in\mathbb Z_{\geq 0}^{m+r}.
\end{equation}
These operators have the  form
$$H_{m,r}^\lambda=\sum_{|\mu|+|\nu|=\|\lambda\|}C_{\mu,\nu}^\lambda(x;y)
T_x^{\delta\mu}T_y^{-\kappa\nu},$$
where the coefficients $C_{\mu,\nu}^\lambda$ are  meromorphic and
$$\|\lambda\|=\lambda_1+2\lambda_2+\dots+(m+r)\lambda_{m+r}. $$
A linear relation
$$\sum_\lambda c_\lambda H_{m,r}^\lambda=0 $$
between the operators \eqref{mh} is equivalent to the corresponding relations
$$\sum_{\|\lambda\|=|\mu|+|\nu|} c_\lambda C_{\mu,\nu}^\lambda(x;y)=0,\qquad 
\mu\in\mathbb Z_{\geq 0}^m,\quad \nu\in\mathbb Z_{\geq 0}^r $$
between the coefficients. In particular, the operators are linearly independent if and only if the matrices 
$$ ( C_{\mu,\nu}^\lambda(x;y))_{|\mu|+|\nu|=N,\,\|\lambda\|=N}$$
of meromorphic functions
have maximal rank for each $N\in\mathbb Z_{\geq 0}$. 
Since we know from Lemma \ref{sil} that this is the case when $\kappa=\delta$, it must  be true for generic values of $\kappa$ and $\delta$.

\section{Second commutation relation}
\label{scs}

The proof of Theorem \ref{sct} is similar to that of Theorem \ref{ct}. We write  \eqref{hh}   as
$$ 
\hat H_{m,r}^{(k)}=\sum_{\substack{\mu\in\mathbb Z_{\geq 0}^m,\,I\subseteq \langle r\rangle,\\|\mu|+|I|=k}} 
 D_{\mu,I}(x;y)\,T_x^{-\delta\mu}T_y^{\kappa I},
$$
where
\begin{multline}\label{dm} D_{\mu,I}(x;y)\\
=
(-1)^{|I|}\prod_{1\leq i<j\leq m}\frac{[x_i-x_j-(\mu_i-\mu_j)\delta]}{[x_i-x_j]}
\prod_{i,j=1}^m\frac{[x_j-x_i+\kappa]_{\mu_i}}{[x_j-x_i+\delta]_{\mu_i}}
\prod_{i\in I,\,j\in I^c}\frac{[y_i-y_j+\delta]}{[y_i-y_j]}
\\
\times
\prod_{i=1}^m\left(\prod_{j\in I}\frac{[x_i-y_j-\delta]}{[x_i-y_j-(\mu_i+1)\delta-\kappa]}\prod_{j\in I^c}\frac{[x_i-y_j-\kappa]}{[x_i-y_j-\mu_i\delta-\kappa]}\right).
\end{multline}
This gives
\begin{multline*}H_{m,r}^{(k)}\hat H_{m,r}^{(l)}\\
=\sum_{\substack{\lambda\in\mathbb Z^m,\,K\in\{-1,0,1\}^r,\\|\lambda|+|K|=k-l}}\,\sum_{\substack{\mu\in\mathbb Z_{\geq 0}^m,\,I\subseteq \langle r\rangle,\\|\mu|+|I|=k}}
   C_{\mu,I}(x;y)D_{\mu-\lambda,I-K}(x+\delta\mu;y-\kappa I)
\,T_x^{\delta\lambda}T_y^{-\kappa K}.\end{multline*}
 Here, we should interpret $D_{\mu-\lambda,I-K}$ as zero unless
   $\mu_j\geq\lambda_j$ and $I_j-K_j\in\{0,1\}$ for all $j$.
In the notation
$$K^i=\{j\in\langle r\rangle;\,K_j=i\},\qquad i=-1,\,0,\,1, $$
the latter condition is equivalent to
$K^1\subseteq I\subseteq K^0\cup K^1$. Then, $I-K=K^{-1}\cup (I\cap K^0)$.
Writing $\hat H_{m,r}^{(l)}H_{m,r}^{(k)}$ in the same way we find that Theorem \ref{sct} is equivalent to the scalar equations
\begin{multline}\label{hdhs}\sum_{\substack{\mu_j\geq\max(0,\lambda_j),\,1\leq j\leq m,\\
K^1\subseteq I\subseteq K^0\cup K^1,\\
|\mu|+|I|=k}} C_{\mu,I}(x;y)D_{\mu-\lambda,I-K}(x+\delta\mu;y-\kappa I)\\
=\sum_{\substack{\mu_j\geq\max(0,\lambda_j),\,1\leq j\leq m,\\
K^1\subseteq I\subseteq K^0\cup K^1,\\
|\mu|+|I|=k}} C_{\mu,I}(x-\delta(\mu-\lambda);y+\kappa(I-K))D_{\mu-\lambda,I-K}(x;y).
 \end{multline}

We want to factor
$$  C_{\mu,I}(x;y)D_{\mu-\lambda,I-K}(x+\delta\mu;y-\kappa I)=F(x;y)G(x;y),$$
where $F$  is independent of $\mu$ and $I$, and $G$ is normalized to take the value $1$ if $\mu=0$ and $I=K^1$. 
Inserting the explicit expressions  \eqref{cm} and \eqref{dm}, we find after a  tedious  computation that
\begin{align*}F(x;y) &=(-1)^{|K|}\prod_{1\leq i<j\leq m}\frac{[x_i-x_j+(\lambda_i-\lambda_j)\delta]}{[x_i-x_j]}\prod_{i\in K^0,\,j\in K^{-1}}\frac{[y_i-y_j-\delta]}{[y_i-y_j]}\\
&\quad\times\prod_{i\in K^1,\,j\in K^{0}}\frac{[y_i-y_j-\delta]}{[y_i-y_j]}
\prod_{i\in K^1,\,j\in K^{-1}}\frac{[y_i-y_j-\delta][y_i-y_j-\delta-\kappa]}{[y_i-y_j][y_i-y_j-\kappa]}
\\
&\quad\times
\prod_{i=1}^m
\left(
\prod_{j\in K^{-1}}\frac{[x_i-y_j-\delta]}{[x_i-y_j+(\lambda_i-1)\delta-\kappa]}
\prod_{j\in K^0}\frac{[x_i-y_j-\kappa]}{[x_i-y_j+\lambda_i\delta-\kappa]}\right.\\
&\quad\quad\times\left.\prod_{j\in K^1}\frac{[x_i-y_j-\kappa]}{[x_i-y_j+\lambda_i\delta]}
\right),
\end{align*}
\begin{align*}
G(x;y)
&=\prod_{i\in (K^0\cap I),\,j\in (K^0\setminus I)}\frac{[y_i-y_j-\delta][y_i-y_j+\delta-\kappa]}{[y_i-y_j][y_i-y_j-\kappa]}\\
&\quad\times\prod_{i,j=1}^m\left(\frac{[x_i-x_j+\delta]_{\mu_i-\mu_j}}{[x_i-x_j+\kappa]_{\mu_i-\mu_j}}\frac{[x_i-x_j+\kappa]_{\mu_i}[x_i-x_j+\kappa]_{\mu_i-\lambda_j}}{[x_i-x_j+\delta]_{\mu_i}[x_i-x_j+\delta]_{\mu_i-\lambda_j}}\right)\\
 &\quad\times\prod_{i=1}^m\left(
\prod_{j\in K^0\cap I}\frac{[x_i-y_j+\lambda_i\delta-\kappa][x_i-y_j+(\mu_i-1)\delta+\kappa]}{[x_i-y_j+(\lambda_i-1)\delta][x_i-y_j+\mu_i\delta]}\right.\\
&\quad\quad\times\left.\prod_{j\in K^0\setminus I}\frac{[x_i-y_j-\delta][x_i-y_j+\mu_i\delta-\kappa]}{[x_i-y_j-\kappa][x_i-y_j+(\mu_i-1)\delta]}\right).
\end{align*}
In the same way, one  gets
\begin{multline*}C_{\mu,I}(x-\delta(\mu-\lambda);y+\kappa(I-K))D_{\mu-\lambda,I-K}(x;y)\\
=\prod_{i,j=1}^m\frac{[x_i-x_j+\kappa]_{\lambda_i-\lambda_j}}{[x_i-x_j+\delta]_{\lambda_i-\lambda_j}}\,F(x;y)G(\hat x;\hat y),
 \end{multline*}
where we have introduced the variables
\begin{equation}\label{hyz}\hat x_j=-x_j-\lambda_j\delta+\kappa,\qquad \hat y_j=-y_j-\delta. \end{equation}

We now observe that the variables $y_j$ with $j\in K^{-1}\cup K^1$ only appear in the prefactor $F$ that can be cancelled from \eqref{hdhs}. Thus, it suffices to prove the case $K^0=\langle r\rangle$, that is, the identity 
\begin{multline}\label{hdha}\sum_{\substack{\mu_j\geq\max(0,\lambda_j),\,1\leq j\leq m,\\
 I\subseteq \langle r\rangle,\,
|\mu|+|I|=k}} G(x;y)\\
=\prod_{i,j=1}^m\frac{[x_i-x_j+\kappa]_{\lambda_i-\lambda_j}}{[x_i-x_j+\delta]_{\lambda_i-\lambda_j}}\sum_{\substack{\mu_j\geq\max(0,\lambda_j),\,1\leq j\leq m,\\
 I\subseteq \langle r\rangle,\,
|\mu|+|I|=k}} G(\hat x;\hat y).
\end{multline}

We will identify \eqref{hdha} with a version of 
 Proposition \ref{csp} where the conditions $0\leq\mu_j\leq\lambda_j$ are replaced by $\mu_j\geq \max(0,\lambda_j)$. To this end, we first rewrite the identity
 $S_k=S_{|\lambda|+r-k}$.
On the left-hand side, we note that the  factor $[x_i-x_j-\lambda_j\delta]_{\mu_i}$ vanishes if $j=i$ and $\mu_i>\lambda_i$. Hence, we may ignore the restrictions $\mu_i\leq \lambda_i$ on the summation indices. It will be convenient to introduce the variables $z_j=-x_j-\lambda_j\delta$. We can then write
$$S_k=T_k(x;y;-x-\delta\lambda), $$ 
where
\begin{align*}
 &\quad T_k(x_1,\dots,x_m;y_1,\dots,y_r;z_1,\dots,z_m)\\
 &=\sum_{\substack{\mu\in\mathbb Z_{\geq 0}^m,\,P\subseteq \langle r\rangle,\\|\mu|+|P|=k}} \,
\prod_{i\in P,\,j\in P^c}\frac{[y_i-y_j-\delta][y_i-y_j+\delta-\kappa]}{[y_i-y_j][y_i-y_j-\kappa]}\\
 &\quad\times\prod_{i,j=1}^m\left(\frac{[x_i-x_j+\delta]_{\mu_i-\mu_j}}{[x_i-x_j+\kappa]_{\mu_i-\mu_j}}\frac{[x_i-x_j+\kappa]_{\mu_i}[x_i+z_j]_{\mu_i}}{[x_i-x_j+\delta]_{\mu_i}[x_i+z_j+\delta-\kappa]_{\mu_i}}\right)\\
 &\quad\times\prod_{i=1}^m\left(
\prod_{j\in P}\frac{[z_i+y_j][x_i-y_j+(\mu_i-1)\delta+\kappa]}{[z_i+y_j+\delta-\kappa][x_i-y_j+\mu_i\delta]}\right.\\
&\quad\times\left.\prod_{j\in P^c}\frac{[x_i-y_j-\delta][x_i-y_j+\mu_i\delta-\kappa]}{[x_i-y_j-\kappa][x_i-y_j+(\mu_i-1)\delta]}\right).
\end{align*}

In the sum $S_{|\lambda|+r-k}$, we  replace $\mu_i\mapsto\lambda_i-\mu_i$ for all $i$ and $P\mapsto P^c$. By a straight-forward computation, we obtain
$$S_{|\lambda|+r-k}=T_k (-x-\delta\lambda;\hat y;x),$$
where $\hat y$ is as in \eqref{hyz}.
 Thus, Proposition \ref{csp} can be formulated as
\begin{equation}\label{tt}T_k(x;y;z)
=T_k(z;\hat y;x),\end{equation} 
 where
 \begin{equation}\label{xzr}x_j+z_j=-\lambda_j\delta,\qquad \lambda_j\in\mathbb Z_{\geq 0},\quad j=1,\dots,m. \end{equation}

\begin{proposition}\label{lsl}
The identity \eqref{tt} holds also without the condition \eqref{xzr}.
\end{proposition}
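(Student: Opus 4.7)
The plan is to extend the identity \eqref{tt} from Proposition \ref{csp}---which is the case $x_j + z_j \in \delta\mathbb{Z}_{\leq 0}$---to arbitrary $(x, z)$ by analytic continuation, freeing the constraints on the $z_j$'s one at a time.

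I would first fix $y \in \mathbb{C}^r$, $x \in \mathbb{C}^m$, and $z_j = -x_j - \lambda_j\delta$ with $\lambda_j \in \mathbb{Z}_{\geq 0}$ for $j = 1, \ldots, m-1$, and view both sides of \eqref{tt} as meromorphic functions of the single variable $z_m \in \mathbb{C}$. The crucial structural claim is that both sides are quasi-periodic in $z_m$ with a common multiplier. In $T_k(x; y; z)$ the variable $z_m$ appears only through
\[
\prod_{i=1}^m \frac{[x_i + z_m]_{\mu_i}}{[x_i + z_m + \delta - \kappa]_{\mu_i}}\prod_{j \in P}\frac{[z_m + y_j]}{[z_m + y_j + \delta - \kappa]},
\]
a ratio of $k = |\mu| + |P|$ factors $[z_m + a]$ in the numerator and $k$ in the denominator with shift sum $\sum a - \sum b = k(\kappa - \delta)$ independent of $(\mu, P)$. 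Since the quasi-period multiplier under $z_m \mapsto z_m + 2\omega_l$ depends only on this shift sum, each summand transforms by the same factor and hence so does the full sum. On the right-hand side $T_k(z; \hat y; x)$, the variable $z_m$ occupies an ``$x$-slot'' and appears with both signs through factors like $[x_j - z_m + c]$; using the oddness $[-x] = -[x]$ to reduce these to standard form and carefully tracking the resulting effective shift sums, one verifies by a telescoping computation that the total again equals $k(\kappa - \delta)$.

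The difference $\Delta(z_m) := T_k(x; y; z) - T_k(z; \hat y; x)$ is therefore quasi-periodic with the prescribed multiplier. Dividing by a theta function with the matching quasi-periodicity realizes $\Delta$ as a meromorphic function on the compact torus $\mathbb{C}/(2\omega_1\mathbb{Z} + 2\omega_2\mathbb{Z})$, so it has a finite divisor. However, by Proposition \ref{csp}, $\Delta$ vanishes at $z_m = -x_m - n\delta$ for every $n \in \mathbb{Z}_{\geq 0}$, and assumption \eqref{gp} ensures that these points are pairwise distinct modulo the lattice; in particular the number of zeros is infinite, forcing $\Delta \equiv 0$. The same argument applied in turn to $z_{m-1}, \ldots, z_1$ (each time with the previously freed variables generic) yields \eqref{tt} for all $(x, z) \in \mathbb{C}^{2m}$. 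The degenerate trigonometric, hyperbolic, and rational cases are handled analogously, with the torus replaced by a cylinder or by $\mathbb{C}$ and finiteness of the divisor replaced by the corresponding statement for rational functions on $\mathbb{P}^1$.

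The main obstacle is the matching of quasi-period multipliers in the second paragraph: the left-hand side computation is immediate since $z_m$ occurs only with positive sign, but on the right-hand side the mixed-sign occurrences require careful bookkeeping of cross-contributions, in particular from the Pochhammer-like factors $[A_i - A_j + \delta]_{\mu_i - \mu_j}$ and $[A_i - A_j + \kappa]_{\mu_i - \mu_j}$ indexed by pairs with $i = m$ or $j = m$.
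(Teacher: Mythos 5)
Your proposal is correct and follows essentially the same route as the paper: check that all summands on both sides of \eqref{tt} have the same quasi-periodicity (with shift sum $k(\kappa-\delta)$), observe that the identity is already known on the infinite set of points $x_j+z_j\in-\delta\mathbb Z_{\geq 0}$ which by \eqref{gp} are distinct modulo the period lattice, and conclude by analytic continuation, freeing the variables one at a time. The paper carries out this continuation in the variables $x_j$ rather than $z_j$, but given the near-symmetry of \eqref{tt} under swapping $x$ and $z$ this is the identical argument.
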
 
 
 \begin{proof}
 We apply a standard analytic continuation argument, see e.g.\ \cite{w}. It is straight-forward to check that each term in \eqref{tt} has the form
 $$C\prod_{j=1}^N\frac{[x_1+a_j]}{[x_1+b_j]}, $$
 where $C$ is independent of $x_1$ and
 $$\sum_{j=1}^N(a_j-b_j)=k(\kappa-\delta). $$
 Assume that we are in a generic situation, when $[x]$ is given by \eqref{ws}.
 Then, all these terms have the same quasi-periodicity with respect to the lattice $\Gamma=\mathbb Z\omega_1+\mathbb Z\omega_2$. It follows that \eqref{tt} holds 
 if 
 $$x_1\in -z_1-\mathbb Z_{\geq 0}\delta+\Gamma.$$
 By our assumption \eqref{gp}, these values are all distinct mod $\Gamma$, so by analytic continuation \eqref{tt} holds for generic $x_1$. By symmetry, the same argument applies to the other variables $x_j$.
 \end{proof}

In the special case $r=0$, Proposition \ref{lsl} reduces to the elliptic hypergeometric transformation formula
\begin{multline}\label{lss}
\sum_{\mu\in\mathbb Z_{\geq 0}^m,\,|\mu|=k}\, \prod_{i,j=1}^m\left(\frac{[x_i-x_j+\delta]_{\mu_i-\mu_j}}{[x_i-x_j+\kappa]_{\mu_i-\mu_j}}\frac{[x_i-x_j+\kappa]_{\mu_i}[x_i+z_j]_{\mu_i}}{[x_i-x_j+\delta]_{\mu_i}[x_i+z_j+\delta-\kappa]_{\mu_i}}\right)\\
=\sum_{\mu\in\mathbb Z_{\geq 0}^m,\,|\mu|=k}\, \prod_{i,j=1}^m\left(\frac{[z_i-z_j+\delta]_{\mu_i-\mu_j}}{[z_i-z_j+\kappa]_{\mu_i-\mu_j}}\frac{[z_i-z_j+\kappa]_{\mu_i}[z_i+x_j]_{\mu_i}}{[z_i-z_j+\delta]_{\mu_i}[z_i+x_j+\delta-\kappa]_{\mu_i}}\right).
\end{multline}
Replacing $m$ by $m+1$ and eliminating the summation index $\mu_{m+1}$, it is straight-forward to check that this is equivalent to \cite[Cor. 4.3]{lsw}. 
Conversely, one can recover Proposition \ref{lsl} from \eqref{lss} by replacing $m$ by $m+r$ and then specializing $x_j+z_j=-\delta$ for $m+1\leq j\leq m+r$. 
The proof of \eqref{lss} given here is very similar to that in \cite{lsw}. However, the observation that \eqref{lss} can be derived from Ruijsenaars' identity \eqref{rsi} is new. 
In \cite{lsw}, it is derived from a more complicated source identity.

We will now consider \eqref{tt} when $z_j=\hat x_j$ is given by \eqref{hyz}.
More precisely, to avoid division by zero we first multiply both sides with
$$\prod_{i,j=1}^n\frac{[x_i+z_j+\delta-\kappa]_{\lambda_j}}{[x_i+z_j]_{\lambda_j}} $$
and then use that
$$\lim_{z_j\rightarrow -x_j-\lambda_j\delta+\kappa}\frac{[x_i+z_j+\delta-\kappa]_{\lambda_j}[x_i+z_j]_{\mu_i}}{[x_i+z_j]_{\lambda_j}[x_i+z_j+\delta-\kappa]_{\mu_i}}=\frac{[x_i-x_j+\kappa]_{\mu_i-\lambda_j}}{[x_i-x_j+\delta]_{\mu_i-\lambda_j}}, $$
which by definition vanishes if $j=i$ and $\mu_i<\lambda_i$.
On the right-hand side, we use
$$\lim_{x_j\rightarrow -z_j-\lambda_j\delta+\kappa}\frac{[z_i+x_j+\delta-\kappa]_{\lambda_i}[z_i+x_j]_{\mu_i}}{[z_i+x_j]_{\lambda_i}[z_i+x_j+\delta-\kappa]_{\mu_i}}=\frac{[x_j-x_i+\kappa]_{\lambda_j-\lambda_i}[z_i-z_j+\kappa]_{\mu_i-\lambda_j}}{[x_j-x_i+\delta]_{\lambda_j-\lambda_i}[z_i-z_j+\delta]_{\mu_i-\lambda_j}}. $$
It is now clear that the resulting limit case of \eqref{tt} is \eqref{hdha} (with $I$=$P$). This completes the proof of Theorem \ref{sct}.

\section{Wronski relations}

We now turn to the proof of Theorem \ref{wt}. Indicating the parameter-dependence in 
 \eqref{cm} as $ C_{\mu,I}(x;y;\delta,\kappa)$, the left-hand side of \eqref{wr} may be expressed as
 \begin{multline*}\sum_{\substack{\mu\in\mathbb Z_{\geq 0}^r,\,I\subseteq \langle m\rangle,\\ |\nu|\in\mathbb Z_{\geq 0}^m,\,J\subseteq\langle r\rangle,\\
 |\mu|+|I|+|\nu|+|J|=K}}
 [(|\mu|+|I|)\kappa+(|\nu|+|J|)\delta]\\
\times C_{\mu,I}(y;x;-\kappa,-\delta)C_{\nu,J}(x+\delta I;y-\kappa\mu;\delta,\kappa)\,T_x^{\delta(\nu+I)}T_y^{-\kappa(J+\mu)}.
\end{multline*}
We make the change of variables
 $\nu\mapsto\nu-I$ and $\mu\mapsto\mu-J$. We then have $\nu_j\geq I_j$ and $\mu_j\geq J_j$ for all $j$, that is, 
 $$I\subseteq \supp(\nu)=\{j\in\langle m\rangle;\,\nu_j>0\}$$
  and $J\subseteq\supp(\mu)$. 
 This gives the expression
 \begin{multline*}
\sum_{\substack{\mu\in\mathbb Z_{\geq 0}^r,\,\nu\in\mathbb Z_{\geq 0}^m,\\
 |\mu|+|\nu|=K}}\,
 \bigg(\sum_{\substack{I\subseteq \supp(\nu),\\J\subseteq \supp(\mu)}} 
 [(|\mu|+|I|-|J|)\kappa+(|\nu|+|J|-|I|)\delta]\\
\times C_{\mu-J,I}(y;x;-\kappa,-\delta)C_{\nu-I,J}(x+\delta I;y+\kappa(J-\mu);\delta,\kappa)\bigg) T_x^{\delta\nu}T_y^{-\kappa \mu}. \end{multline*}
 We introduce the notation $M=\supp(\nu)$ and $N=\supp(\mu)$, and normalize the inner sum so  that the term with $I=\emptyset$ and $J=N$ is $1$. That is, we define
   \begin{align}\nonumber D_{\mu,\nu}(x;y)&=\sum_{I\subseteq M,\,J\subseteq N} \frac{[(\mu+|I|-|J|)\kappa+(|\nu|+|J|-|I|)\delta]}{[(|\mu|-|N|)\kappa+(|\nu|+|N|)\delta]}\\
\label{dmn} &\quad  \times\frac{C_{\mu-J,I}(y;x;-\kappa,-\delta)C_{\nu-I,J}(x+\delta I;y+\kappa(J-\mu);\delta,\kappa)}{C_{\mu-N,\emptyset}(y;x;-\kappa,-\delta)C_{\nu,N}(x;y+\kappa(N-\mu);\delta,\kappa)}.\end{align}
  Then, Theorem \ref{wr} is equivalent to the identity
  \begin{equation}\label{dwr}D_{\mu,\nu}(x;y)=0,\qquad |\mu|+|\nu|>0. \end{equation}
  
  We now insert \eqref{cm} into \eqref{dmn}. To distinguish  the shifted factorials  \eqref{sf}  with base $\delta$ from shifted factorials with base 
 $-\kappa$,  we use the notation
  $$[x;-\kappa]_k=[x][x-\kappa]\dotsm[x-(k-1)\kappa].$$
  By a straight-forward computation, 
  the factors involving only $y$-variables can be simplified to 
  \begin{multline*}\prod_{1\leq i<j\leq r}\frac{[y_i-y_j+(\mu_j-J_j-\mu_i+J_i)\kappa]}{[y_i-y_j+(\mu_j-N_j-\mu_i+N_i)\kappa]} \\
\begin{split}&\quad\times  \prod_{i,j=1}^r\frac{[y_i-y_j-\delta;-\kappa]_{\mu_i-J_i}[y_i-y_j-\kappa;-\kappa]_{\mu_i-N_i}}{[y_i-y_j-\kappa;-\kappa]_{\mu_i-J_i}[y_i-y_j-\delta;-\kappa]_{\mu_i-N_i}}
  \\
&\quad  \times \prod_{i\in J,\,j\in \langle r\rangle\setminus J}\frac{[y_i-y_j+(\mu_j-\mu_i+1)\kappa-\delta]}{[y_i-y_j+(\mu_j-\mu_i+1)\kappa]}\\
 &\quad  \times   \prod_{i\in N,\,j\in \langle r\rangle \setminus N}\frac{[y_i-y_j+(\mu_j-\mu_i+1)\kappa]}
   {[y_i-y_j+(\mu_j-\mu_i+1)\kappa-\delta]}
  \\
&=\prod_{i\in N\setminus J}\left( \prod_{j\in J}\frac{[y_i-y_j+(\mu_j-\mu_i-1)\kappa+\delta]}{[y_i-y_j+(\mu_j-\mu_i)\kappa]}
\prod_{j\in N}\frac{[y_i-y_j-(\mu_i-1)\kappa-\delta]}{[y_i-y_j-\mu_i\kappa]}\right).
\end{split}\end{multline*}
The factors involving both $x$- and $y$-variables are
\begin{multline*}
\prod_{i=1}^r\left(\prod_{j\in I}\frac{[y_i-x_j+\delta]}{[y_i-x_j-(\mu_i-J_i)\kappa]}\prod_{j\in\langle m\rangle\setminus I}\frac{[y_i-x_j+\kappa]}{[y_i-x_j-(\mu_i-J_i-1)\kappa]}
\right.\\
\begin{split}&\quad\times\left.\prod_{j=1}^m\frac{[y_i-x_j-(\mu_i-N_i-1)\kappa]}{[y_i-x_j+\kappa]}\right)\\
&\quad\times \prod_{i=1}^m\left(\prod_{j\in J}\frac{[x_i-y_j+ I_i\delta+(\mu_j-2)\kappa]}{[x_i-y_j+\nu_i\delta+(\mu_j-1)\kappa]}\prod_{j\in\langle r\rangle\setminus J}\frac{[x_i-y_j+(I_i-1)\delta+\mu_j \kappa]}{[x_i-y_j+(\nu_i-1)\delta+\mu_j \kappa]}\right.\\
&\quad\left.\times\prod_{j\in N}\frac{[x_i-y_j+\nu_i\delta+(\mu_j-1)\kappa]}{[x_i-y_j+(\mu_j-2)\kappa]}
\prod_{j\in \langle r\rangle\setminus N}\frac{[x_i-y_j+(\nu_i-1)\delta]}{[x_i-y_j-\delta]}\right)\\
&=\prod_{i\in I}\left(\prod_{j\in J}\frac{[x_i-y_j+\delta+(\mu_j-2)\kappa]}{[x_i-y_j+(\mu_j-1)\kappa]}\prod_{j\in N}\frac{[x_i-y_j-\delta]}{[x_i-y_j-\kappa]}\right)\\
&\quad\times 
\prod_{i\in N\setminus J}\left(
\prod_{j\in M\setminus I }\frac{[y_i-x_j+\delta-\mu_i\kappa]}{[y_i-x_j-(\mu_i-1)\kappa]}
\prod_{j\in M}\frac{[y_i-x_j-\nu_j\delta-(\mu_i-1)\kappa]}{[y_i-x_j-(\nu_j-1)\delta-\mu_i\kappa]}
\right)
.
\end{split}\end{multline*}
Finally, the factors involving only $x$-variables are
\begin{multline*}
\prod_{i\in I,\,j\in\langle m\rangle\setminus I}\frac{[x_i-x_j+\kappa]}{[x_i-x_j]}
\prod_{1\leq i<j\leq m}\frac{[x_i-x_j]}{[x_i-x_j+(I_i-I_j)\delta]}\\
\times\prod_{i,j=1}^m\frac{[x_i-x_j+\delta;\delta]_{\nu_i}[x_i-x_j+(I_i-I_j)\delta+\kappa;\delta]_{\nu_i-I_i}}{[x_i-x_j+\kappa;\delta]_{\nu_i}[x_i-x_j+(I_i-I_j+1)\delta;\delta]_{\nu_i-I_i}}\\
=\prod_{i\in I}\left(\prod_{j\in M\setminus I}\frac{[x_i-x_j+\delta-\kappa]}{[x_i-x_j]}
\prod_{j\in M}\frac{[x_i-x_j-\nu_j\delta]}{[x_i-x_j-(\nu_j-1)\delta-\kappa]}
\right)
.
\end{multline*}
We conclude that
\begin{multline}\label{dmne}D_{\mu,\nu}(x;y)= \sum_{I\subseteq M,\,J\subseteq N}(-1)^{[I|+|J|+|N|} \frac{[(\mu+|I|-|J|)\kappa+(|\nu|+|J|-|I|)\delta]}{[(|\mu|-|N|)\kappa+(|\nu|+|N|)\delta]}\\
\begin{split}&\times
\prod_{i\in I}\left(
\prod_{j\in M \setminus I}\frac{[x_i-x_j+\delta-\kappa]}{[x_i-x_j]}
\prod_{j\in J}\frac{[x_i-y_j+\delta+(\mu_j-2)\kappa]}{[x_i-y_j+(\mu_j-1)\kappa]}
\right.\\
&\times\left.\prod_{j\in M}\frac{[x_i-x_j-\nu_j\delta]}{[x_i-x_j-(\nu_j-1)\delta-\kappa]}\prod_{j\in N}\frac{[x_i-y_j-\delta]}{[x_i-y_j-\kappa]}\right)\\
&\times\prod_{i\in N\setminus J}\left(
\prod_{j\in M\setminus I}\frac{[y_i-x_j+\delta-\mu_i\kappa]}{[y_i-x_j-(\mu_i-1)\kappa]}
\prod_{j\in J}\frac{[y_i-y_j+(\mu_j-\mu_i-1)\kappa+\delta]}{[y_i-y_j+(\mu_j-\mu_i)\kappa]}
\right.\\
&\times
\left.
\prod_{j\in M}\frac{[y_i-x_j-\nu_j\delta-(\mu_i-1)\kappa]}{[y_i-x_j-(\nu_j-1)\delta-\mu_i\kappa]}
\prod_{j\in N}
\frac{[y_i-y_j-(\mu_i-1)\kappa-\delta]}{[y_i-y_j-\mu_i\kappa]}
\right).
\end{split}\end{multline}

We now explain how to identify \eqref{dwr} with a special case of
 the source identity \eqref{nssi}. As a first step, we write the index set 
 in \eqref{nssi} as a disjoint union
  $\langle n\rangle=M\sqcup N$. We make a corresponding change of variables
 $z_i\mapsto x_i$ for $i\in M$, $z_i\mapsto y_i$ for $i\in N$, $w_i\mapsto u_i$ for 
 $i\in M$ and $w_i\mapsto v_i$ for $i\in N$. 
  Finally, we make the substitutions $I\cap M\mapsto I$, $I^c\cap N\mapsto J$. 
  The left-hand side of \eqref{nssi} then takes the form
 \begin{multline*}
 \sum_{I\subseteq M,\,J\subseteq N}(-1)^{|I|+|J|+|N|}\frac{[|x|+|y|-|u|-|v|+(|I|+|N|-|J|)a]}{[|x|+|y|-|u|-|v|]}\\
\times \prod_{i\in I}\left(
\prod_{j\in M\setminus I}\frac{[x_i-x_j+a]}{[x_i-x_j]}\prod_{j\in J}\frac{[x_i-y_j+a]}{[x_i-y_j]}
\prod_{j\in M}\frac{[x_i-u_j]}{[x_i-u_j+a]}
\prod_{j\in N} \frac{[x_i-v_j]}{[x_i-v_j+a]}
 \right) \\
\times 
\prod_{i\in N\setminus J}\left(\prod_{j\in M\setminus I}\frac{[y_i-x_j+a]}{[y_i-x_j]}
 \prod_{j\in J}\frac{[y_i-y_j+a]}{[y_i-y_j]}\right.\\
 \left.\times\prod_{j\in M}\frac{[y_i-u_j]}{[y_i-u_j+a]}
 \prod_{j\in N}\frac{[y_i-v_j]}{[y_i-v_j+a]}
 \right).
 \end{multline*}
 Substituting
 $a\mapsto\delta-\kappa$ and, for all $i$, $x_i\mapsto x_i$, $y_i\mapsto y_i-(\mu_i-1)\kappa$, $u_i\mapsto x_i+\nu_i\delta$, $v_i\mapsto y_i+\delta$ in this expression gives
 \eqref{dmne}. This completes the proof of Theorem~\ref{wt}.

\section{Kernel function identities}
To prove Theorem \ref{kt} we will be need the following elliptic hypergeometric transformation formula. 

\begin{proposition}\label{ktp}
Assume that the parameters $x_1,\dots,x_m$, $y_1,\dots,y_r$, $X_1,\dots,X_n$, $Y_1,\dots,Y_s$ and
$a_1,\dots,a_{m+n}$  satisfy
\begin{equation}\label{bc}|x|+|a|+s\delta=|X|+r\delta. \end{equation}
Then,
\begin{multline}\label{kti}
\sum_{\substack{\mu\in\mathbb Z_{\geq 0}^m,\,I\subseteq \langle r\rangle,\\|\mu|+|I|=k}}
(-1)^{|I|}\prod_{1\leq i<j\leq m}\frac{[x_i-x_j+(\mu_i-\mu_j)\delta]}{[x_i-x_j]}
\\
\begin{split}&\quad\times\prod_{i\in I,\,j\in I^c}\frac{[y_i-y_j-\delta]}{[y_i-y_j]}\prod_{1\leq i\leq m,\,j\in I^c}
 \frac{[x_i-y_j-\delta]}{[x_i-y_j+(\mu_i-1)\delta]}
\\
&\quad\times\prod_{i=1}^m\left(\frac{\prod_{j=1}^{m+n}[x_i+a_j]_{\mu_i}}{\prod_{j=1}^m[x_i-x_j+\delta]_{\mu_i}\prod_{j=1}^n[x_i+X_j]_{\mu_i}}
\prod_{j=1}^s
\frac{[x_i+Y_j+\mu_i\delta]}{[x_i+Y_j]}\right)\\
&\quad\times\prod_{i\in I}\left(\frac{\prod_{j=1}^{m+n}[y_i+a_j]}{\prod_{j=1}^m[y_i-x_j-\mu_j\delta]\prod_{j=1}^n[y_i+X_j]}
\prod_{j=1}^s\frac{[y_i+Y_j+\delta]}{[y_i+Y_j]}\right)\\
&=
\sum_{\substack{\mu\in\mathbb Z_{\geq 0}^n,\,I\subseteq \langle s \rangle,\\|\mu|+|I|=k}}
(-1)^{|I|}\prod_{1\leq i<j\leq n}\frac{[X_i-X_j+(\mu_i-\mu_j)\delta]}{[x_i-x_j]}
\\
&\quad\times\prod_{i\in I,\,j\in I^c}\frac{[Y_i-Y_j-\delta]}{[Y_i-Y_j]}\prod_{1\leq i\leq n,\,j\in I^c}
 \frac{[X_i-Y_j-\delta]}{[X_i-Y_j+(\mu_i-1)\delta]}
\\
&\quad\times\prod_{i=1}^n\left(\frac{\prod_{j=1}^{m+n}[X_i-a_j]_{\mu_i}}{\prod_{j=1}^n[X_i-X_j+\delta]_{\mu_i}\prod_{j=1}^m[X_i+x_j]_{\mu_i}}
\prod_{j=1}^r
\frac{[X_i+y_j+\mu_i\delta]}{[X_i+y_j]}\right)\\
&\quad\times\prod_{i\in I}\left(\frac{\prod_{j=1}^{m+n}[Y_i-a_j]}{\prod_{j=1}^n[Y_i-X_j-\mu_j\delta]\prod_{j=1}^m[Y_i+x_j]}
\prod_{j=1}^r\frac{[Y_i+y_j+\delta]}{[Y_i+y_j]}\right).
\end{split}\end{multline}
\end{proposition}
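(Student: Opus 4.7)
The plan is to derive Proposition~\ref{ktp} from the Kajihara--Noumi source identity \eqref{ksni} by a combination of specialization and analytic continuation, directly parallel to the derivation of Proposition~\ref{lsl} from \eqref{rsi} in Section~\ref{scs}.

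First I would establish a restricted version of \eqref{kti} in which the summation indices are bounded: for fixed $\lambda\in\mathbb{Z}_{>0}^m$ and $\tilde\lambda\in\mathbb{Z}_{>0}^n$ (with a dimension-matching condition that will eventually reproduce \eqref{bc}), the left-hand side is summed over $0\leq\mu_i\leq\lambda_i$ and the right-hand side over $0\leq\tilde\mu_i\leq\tilde\lambda_i$. This restricted identity should follow by applying \eqref{ksni} on $N$ variables with
\[
z = \bigl(x_i+(k-1)\delta\bigr)_{1\leq i\leq m,\,1\leq k\leq\lambda_i}\cup(y_j)_{1\leq j\leq r}\cup(\text{an auxiliary } a\text{-block}),
\]
and $w$ an analogous combination of shifted $X_i$'s, $Y_j$'s, and a mirror $a$-block (reflecting the sign flip $a_j\mapsto -a_j$ between the two sides of \eqref{kti}). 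As in the proof of Proposition~\ref{csp}, the sum over $I\subseteq\langle N\rangle$ in \eqref{ksni} collapses to a sum over $(\mu,I')\in\mathbb{Z}_{\geq 0}^m\times 2^{\langle r\rangle}$ with $\mu_i\leq\lambda_i$, because the factor $[z_{k+1}-z_k-\delta]$ vanishes whenever $I$ fails to take an initial segment on a shift block; the $w$-side collapses analogously. Telescoping of the ratios $[z_i+w_j+a]/[z_i+w_j]$ and $[z_i-z_j-a]/[z_i-z_j]$ should then produce the shifted-factorial factors appearing in \eqref{kti}, by a bookkeeping calculation of the same type as the $F_1,F_2,F_3,F_4$ computation in Section~\ref{cs}.

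Second, analytic continuation would remove the upper bounds $\mu_i\leq\lambda_i$ and $\tilde\mu_i\leq\tilde\lambda_i$. Under \eqref{bc} every term of \eqref{kti} has the same quasi-periodicity in $x_1$ with respect to the period lattice $\Gamma=\mathbb{Z}\omega_1+\mathbb{Z}\omega_2$, so, exactly as in the proof of Proposition~\ref{lsl}, coincidence of the two sides on the infinite discrete set obtained by varying $\lambda_1\in\mathbb{Z}_{>0}$ forces coincidence generically in $x_1$; by symmetry one obtains the identity in all the remaining parameters.

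The main obstacle will be Step~1, namely finding the correct specialization of the auxiliary $a$-blocks so that, after the $I$-summation collapses and the telescoping is completed, the residual factors assemble exactly into $\prod_{j=1}^{m+n}[x_i+a_j]_{\mu_i}$ on the left-hand side and $\prod_{j=1}^{m+n}[X_i-a_j]_{\mu_i}$ on the right-hand side. Because the $a_j$ appear with opposite sign on the two sides of \eqref{kti}, this specialization cannot be symmetric; one must keep careful track of the dimension count to confirm that it forces precisely the balancing condition \eqref{bc}, and may need to take a controlled limit in the length of the $a$-blocks to disentangle them cleanly from the genuine $\mu$- and $I'$-summations.
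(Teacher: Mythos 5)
Your high-level strategy --- specialize the Kajihara--Noumi identity \eqref{ksni} with shift-block variables, let the sum collapse, then analytically continue --- is the same as the paper's, but two of the specific mechanisms you propose are off in a way that would make the proof fail as written.

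\textbf{No auxiliary $a$-blocks are needed.} The paper specializes $z$ exactly as in \eqref{zs} (shift blocks of lengths $\lambda_i$ for the $x_i$, followed by the singletons $y_j$) and specializes $w$ analogously with shift blocks of lengths $\nu_i$ for the $X_i$ followed by the $Y_j$, requiring $N=|\lambda|+r=|\nu|+s$. After the sum collapses and the products telescope, the factor $\prod_{j=1}^{m+n}[x_i+a_j]_{\mu_i}$ shows up automatically with $a_j=-x_j-\lambda_j\delta$ ($1\le j\le m$) and $a_{m+j}=X_j+\nu_j\delta$ ($1\le j\le n$): the first group comes from the upper ends of the $x$-shift blocks in the $F_1$-type product $[z_i-z_j-\delta]/[z_i-z_j]$, and the second group comes from the cross-terms $[z_i+w_j+\delta]/[z_i+w_j]$ with $z_i$ in an $x$-block and $w_j$ in an $X$-block. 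The $a_j$'s are therefore not independent data at the specialization stage; they are determined by the block lengths and the variables themselves. Introducing separate ``auxiliary $a$-blocks'' adds degrees of freedom that have no counterpart in \eqref{kti} and leaves you with no way to match the factors; this is precisely the obstacle you flag, and the way out is to recognize that no such blocks are required.

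\textbf{The upper bounds $\mu_i\le\lambda_i$ are removed for free, not by analytic continuation.} Once $a_j=-x_j-\lambda_j\delta$ is in place, the factor $[x_i-x_j-\lambda_j\delta]_{\mu_i}$ vanishes when $i=j$ and $\mu_i>\lambda_i$, so the truncated sum over $0\le\mu_i\le\lambda_i$ is \emph{identically} the unrestricted sum over $\mu_i\ge 0$. Analytic continuation is instead used at a different point: the specialization only gives \eqref{kti} for the discrete family of $a_j$'s of the form above, with $\lambda,\nu\in\mathbb Z_{\ge0}^m\times\mathbb Z_{\ge0}^n$ subject to $|\lambda|+r=|\nu|+s$ (which is what encodes the balancing condition \eqref{bc}); continuation in the $a_j$ (and $\lambda_j$, $\nu_j$), exactly as in the proof of Proposition~\ref{lsl}, then lifts the result to all $a$ satisfying \eqref{bc}. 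So you should continue in the $a$-parameters, not in the truncation bounds.

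With these two corrections your outline becomes essentially the paper's argument. (The paper also notes a shortcut: the $r=s=0$ case is the known Kajihara--Noumi transformation, and the general case follows by re-specializing $m+n$ of the $a_j$'s to $y_j-\delta$ and $Y_j+\delta$ after enlarging $m\mapsto m+r$, $n\mapsto n+s$, which bypasses the block computation entirely.)
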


Proposition \ref{ktp} is a  slight variation of a transformation formula found by
 Kajihara \cite{k} in the trigonometric case and, in general, in \cite{kn} and \cite{r2}. 
To be precise, that transformation appears as the special case $r=s=0$.
On the other hand, given that special case, the general case follows by substituting
$x\mapsto(x,y)$, $X\mapsto (X,Y)$ and
$$(a_1,\dots,a_{m+n})\mapsto(a_1,\dots,a_{m+n},y_1-\delta,\dots,y_r-\delta,Y_1+\delta,\dots,Y_s+\delta).$$

Alternatively, one can follow the approach of \cite{kn} and derive Proposition \ref{ktp}
from the  source identity
\eqref{ksni}. We find it instructive to sketch this proof. 
We start from \eqref{ksni}, with $a=\delta$ and $n$ replaced by $N$.
We first specialize the $z$-variables  as in \eqref{zs} and make a similar specialization 
\begin{multline*}
(w_1,\dots,w_N)\\
= \big(X_1,X_1+\delta,\dots,X_1+(\nu_1-1)\delta,\dots,
X_n,X_n+\delta,\dots,X_n+(\nu_n-1)\delta,
Y_1,\dots,Y_s\big).
\end{multline*}
 Here, we must have
\begin{equation}\label{dbc}N=|\lambda|+r=|\nu|+s. \end{equation}

Just as in the proof of Proposition \ref{csp}, the left-hand side of
\eqref{ksni} reduces to a sum over  $(\mu_1,\dots,\mu_m,P)$, 
where  $0\leq \mu_j\leq \lambda_j$ for each $j$, $P\subseteq \langle r\rangle$ and $|\mu|+|P|=k$. The resulting expression contains the product
$$F(\delta)=\prod_{i\in I,\,j\in I^c}\frac{[z_i-z_j-\delta]}{[z_i-z_j]},$$
which is computed in \eqref{fi}.
The remaining factors are easily computed in a similar way.
Apart from a sign factor $(-1)^k$,  the left-hand side of \eqref{zs} takes the form
\begin{multline*}
\sum_{\substack{\mu\in\mathbb Z_{\geq 0}^m,\,P\subseteq \langle r\rangle,\\
0\leq \mu_j\leq\lambda_j,\,1\leq j\leq m,\\
|\mu|+|P|=k}}
(-1)^{|P|}\prod_{1\leq i<j\leq m}\frac{[x_i-x_j+(\mu_i-\mu_j)\delta]}{[x_i-x_j]}\prod_{i,j=1}^m\frac{[x_i-x_j-\lambda_j\delta]_{\mu_i}}{[x_i-x_j+\delta]_{\mu_i}}\\
\times\prod_{i=1}^m\left(
  \prod_{j\in P}
 \frac{[x_i-y_j+\lambda_i\delta]}{[x_i-y_j+\mu_i\delta]}
 \prod_{j\in P^c}
 \frac{[x_i-y_j-\delta]}{[x_i-y_j+(\mu_i-1)\delta]}
 \right)\prod_{i\in P,\,j\in P^c}\frac{[y_i-y_j-\delta]}{[y_i-y_j]}\\
\times\prod_{\substack{1\leq i\leq m,\\ 1\leq j\leq n}}\frac{[x_i+X_j+\nu_j\delta]_{\mu_i}}{[x_i+X_j]_{\mu_i}}
\prod_{\substack{1\leq i\leq m,\\1\leq j\leq s}}
\frac{[x_i+Y_j+\mu_i\delta]}{[x_i+Y_j]}\\
\times\prod_{i\in P,\,1\leq j\leq n}\frac{[y_i+X_j+\nu_j\delta]}{[y_i+X_j]}\prod_{i\in P,\,1\leq j\leq s}\frac{[y_i+Y_j+\delta]}{[y_i+Y_j]}.
\end{multline*}
Here, the restrictions  $\mu_j\leq\lambda_j$ may be ignored, since 
$[x_i-x_j-\lambda_j\delta]_{\mu_i}$ vanishes if $i=j$ and $\mu_j>\lambda_j$. 
We then obtain the left-hand side of \eqref{kti}, in the special case when
\begin{equation}\label{as}(a_1,\dots,a_{m+n})=(-x_1-\lambda_1\delta,\dots,-x_m-\lambda_m\delta,
X_1+\nu_1\delta,\dots,X_n+\nu_n\delta). \end{equation}
By \eqref{dbc}, this is consistent with the balancing condition \eqref{bc}.
It is clear from symmetry considerations that the right-hand side of 
\eqref{ksni} reduces to the corresponding right-hand side of \eqref{kti}.
We conclude that \eqref{kti} holds in the infinitely many special cases \eqref{as},
with $\lambda_j$ and $\nu_j$ non-negative integers subject to \eqref{dbc}. 
Finally, by the same type of analytic continuation argument that was used in the proof
of Proposition \ref{lsl}, \eqref{kti} holds for general values of  $a_j$, as long as \eqref{bc} is satisfied.
This proves Proposition \ref{ktp}.


We now turn to the proof of Theorem \ref{kt}. We write the kernel function identity as
\begin{multline} \label{kfm}\sum_{\substack{\mu\in\mathbb Z_{\geq 0}^m,\,I\subseteq \langle r\rangle,\\|\mu|+|I|=k}} 
C_{\mu,I}(x;y)\frac{\Phi(x+\delta\mu;y-\kappa I;X;Y)}{\Phi(x;y;X;Y)}\\
=
\sum_{\substack{\mu\in\mathbb Z_{\geq 0}^n,\,I\subseteq \langle s \rangle,\\|\mu|+|I|=k}} 
C_{\mu,I}(X;Y)\frac{\Phi(x;y;X+\delta\mu;Y-\kappa I)}{\Phi(x;y;X;Y)}.\end{multline}
It is straight-forward to compute
\begin{multline*}\frac{\Phi(x+\delta\mu;y-\kappa I;X;Y)}{\Phi(x;y;X;Y)}
=\prod_{i=1}^m\left(\prod_{j=1}^n\frac{[x_i+X_j-\kappa]_{\mu_i}}{[x_i+X_j]_{\mu_i}} \prod_{j=1}^s
\frac{[x_i+Y_j+\delta\mu_i]}{[x_i+Y_j]}
\right)\\
\times\prod_{{i\in I}}\left(\prod_{j=1}^n\frac{[y_i+X_j-\kappa]}{[y_i+X_j]}\prod_{j=1}^s\frac{[y_i+Y_j+\delta]}{[y_i+Y_j]}
\right).
\end{multline*}
Inserting \eqref{cm}, the left-hand side of \eqref{kfm} is 
\begin{multline*}\sum_{\substack{\mu\in\mathbb Z_{\geq 0}^m,\,I\subseteq \langle r\rangle,\\|\mu|+|I|=k}} 
(-1)^{|I|}\prod_{1\leq i<j\leq m}\frac{[x_i-x_j+(\mu_i-\mu_j)\delta]}{[x_i-x_j]}
\prod_{i\in I,\,j\in I^c}\frac{[y_i-y_j-\delta]}{[y_i-y_j]}\\
\times\prod_{i,j=1}^m\frac{[x_i-x_j+\kappa]_{\mu_i}}{[x_i-x_j+\delta]_{\mu_i}}
\prod_{i=1}^m\left(\prod_{j\in I}\frac{[x_i-y_j-\kappa]}{[x_i-y_j+\mu_i\delta]}\prod_{j\in I^c}\frac{[x_i-y_j-\delta]}{[x_i-y_j+(\mu_i-1)\delta]}\right)\\
\times\prod_{i=1}^m\left(\prod_{j=1}^n\frac{[x_i+X_j-\kappa]_{\mu_i}}{[x_i+X_j]_{\mu_i}} \prod_{j=1}^s
\frac{[x_i+Y_j+\delta\mu_i]}{[x_i+Y_j]}
\right)\\
\times\prod_{{i\in I}}\left(\prod_{j=1}^n\frac{[y_i+X_j-\kappa]}{[y_i+X_j]}\prod_{j=1}^s\frac{[y_i+Y_j+\delta]}{[y_i+Y_j]}
\right).
\end{multline*}
This agrees with the left-hand side of \eqref{kti}, under the specialization
$$(a_1,\dots,a_{m+n})=(\kappa-x_1,\dots,\kappa-x_m,X_1-\kappa,\dots,X_m-\kappa). $$
Note that the balancing condition \eqref{bc} reduces to \eqref{kfc} in this case.
By symmetry, the right-hand side of \eqref{kfm} reduces to the corresponding right-hand side of \eqref{kti}. This proves Theorem \ref{kt}.

\appendix 
\section{Relation to deformed Ruijsenaars model} 
\label{appA}

The conventions used in this paper differ from the ones that are more common in the physics literature, going back to the work of Ruijsenaars \cite{r}.
For the convenience of the reader, we explain the relation between these conventions. In particular, we give the precise relation  between the operators $H_{m,r}^{(1)}$ and the deformed Ruijsenaars model introduced in \cite{ahl}.

The Ruijsenaars systems are defined by two difference operators, $\hS^+$ and $\hS^-$, defining a Hamiltonian $\hH=\hS^++\hS^-$ and a momentum operator $\hP=\hS^+-\hS^-$ which, together with a boost operator $\hB$, 
represent the Poincar\'e algebra in 1+1 spacetime dimensions. That is, the commutation relations 
\begin{equation}
\label{hpb} 
[\hH,\hP]=0,\quad [\hH,\hB]=\ti \hP,\quad [\hP,\hB]=\ti \hH .
\end{equation} 
are satisfied \cite{r}. 
In particular, for the deformed elliptic Ruijsenaars model, the corresponding operators  are given by (we rename $(\beta,\beta g)$ in \cite[Eq.\ (16)]{ahl} to $(\ti\delta,\ti\kappa)$ and drop an irrelevant overall constant)
\begin{equation*} 
\label{hS}
\hS^{\pm} = \sum_{i=1}^m \frac{[\kappa]}{[\delta]}A_i^{\mp}e^{\pm\delta\frac{\partial}{\partial x_i}}A_i^{\pm} - \sum_{i=1}^r B_i^{\mp}e^{\mp\kappa\frac{\partial}{\partial y_i}}B_i^{\pm},
\end{equation*} 
where
\begin{align*} 
A_i^\pm& = \prod_{\substack{1\leq j\leq m\\j\neq i}} \left(\frac{[x_i-x_j\mp\kappa ]}{[x_i-x_j]}\right)^{1/2}  \prod_{j=1}^r \left( \frac{[x_i-y_j\mp\kappa/2\pm\delta/2]}{[x_i-y_j\mp\kappa /2\mp\delta /2]}\right)^{1/2},\\
B_i^\pm& =\prod_{\substack{1\leq j\leq r\\j\neq i}} \left(\frac{[y_i-y_j\pm\delta ]}{[y_i-y_j]}\right)^{1/2}  \prod_{j=1}^m \left( \frac{[y_i-x_j\pm\delta/2\mp\kappa /2]}{[y_i-x_j\pm\delta/2\pm\kappa/2]}\right)^{1/2}. 
\end{align*} 
One can check that $\hH=\hS^++\hS^-$, $\hP=\hS^+-\hS^-$, together with 
\begin{equation*} 
\hB = \frac{\ti}{\delta} \sum_{i=1}^m x_i -\frac{\ti}{\kappa} \sum_{i=1}^ry_i , 
\end{equation*} 
indeed satisfy \eqref{hpb}. 

We will now show that,
up to a similarity transformation and shifts of the variables, the operators $\hS^+$ and $\hS^-$
 are equal to, respectively, our operators $H^{(1)}_{m,r}$ and $\hat H^{(1)}_{m,r}$.
To this end, we introduce the function
\begin{align*} 
\Delta &=  \prod_{\substack{1\leq i,j\leq m\\i\neq j}}\frac{G_\delta(x_i-x_j+\kappa)}{G_\delta(x_i-x_j)}  \prod_{\substack{1\leq i,j\leq r\\i\neq j}}\frac{G_{-\kappa}(y_i-y_j-\delta)}{G_{-\kappa}(y_i-y_j)}
\\ 
&\quad\times \prod_{i=1}^m\prod_{j=1}^r \frac1{[x_i-y_j +\kappa/2-\delta/2][y_i-x_i +\kappa/2-\delta/2]}.
\end{align*} 
A straight-forward computation gives
\begin{equation*} 
\begin{split} 
\Delta^{-1/2}\hS^{\pm}\Delta^{1/2} = & \frac{[\kappa]}{[\delta]} \sum_{i=1}^m  \prod_{\substack{1\leq j\leq m\\j\neq i}} \frac{[x_i-x_j\pm\kappa ]}{[x_i-x_j]} 
 \prod_{j=1}^r\frac{[x_i-y_j\pm\kappa/2\mp\delta/2]}{[x_i-y_j\pm\kappa /2\pm \delta /2]} 
e^{\pm\delta\frac{\partial}{\partial x_i}}\\ 
&- \sum_{i=1}^r \prod_{\substack{1\leq j\leq r\\j\neq i}}\frac{[y_i-y_j\mp\delta ]}{[y_i-y_j]} 
  \prod_{j=1}^m \frac{[y_i-x_j\mp\delta/2\pm\kappa /2]}{[y_i-x_j\mp\delta/2\mp\kappa/2]} e^{\mp\kappa\frac{\partial}{\partial y_i}}.
\end{split} 
\end{equation*} 
Moreover, the case $k=1$ of \eqref{docm} and \eqref{hh} can be written
\begin{equation*} 
\begin{split} 
H^{(1)}_{m,r} = &\frac{[\kappa]}{[\delta]}\sum_{i=1}^m \prod_{\substack{1\leq j\leq m\\j\neq i}} \frac{[x_i-x_j+\kappa]}{[x_i-x_j]} \prod_{j=1}^r \frac{[x_i-y_j-\delta]}{[x_i-y_j]}e^{\delta\frac{\partial}{\partial x_i}}\\
& -  \sum_{i=1}^r \prod_{\substack{1\leq j\leq r\\j\neq i}}\frac{[y_i-y_j-\delta]}{[y_i-y_j]} \prod_{j=1}^m \frac{[y_i-x_j+\kappa]}{[y_i-x_j]}e^{-\kappa\frac{\partial}{\partial y_i}},
\end{split}
\end{equation*} 
\begin{equation*} 
\begin{split} 
\hat H^{(1)}_{m,r} = & \frac{[\kappa]}{[\delta]}\sum_{i=1}^m \prod_{\substack{1\leq j\leq m\\j\neq i}}\frac{[x_i-x_j-\kappa]}{[x_i-x_j]} \prod_{j=1}^r \frac{[x_i-y_j-\kappa]}{[x_i-y_j-\delta-\kappa]}e^{-\delta\frac{\partial}{\partial x_i}}\\
& -  \sum_{i=1}^r \prod_{\substack{1\leq j\leq r\\j\neq i}}\frac{[y_i-y_j+\delta]}{[y_i-y_j]} \prod_{j=1}^m \frac{[y_i-x_j+\delta]}{[y_i-x_j+\kappa+\delta]}e^{\kappa\frac{\partial}{\partial y_i}}.
\end{split}
\end{equation*} 
This makes manifest that, after shifting the variables in $\Delta^{-1/2}\hS^{\pm}\Delta^{1/2}$ as 
\begin{equation} \label{vs}
x_i\to x_i \pm  \delta/2,\quad y_j\to y_j\mp \kappa/2\quad (i=1,\ldots,m, j=1,\ldots,r), 
\end{equation} 
one obtains the operators $H^{(1)}_{m,r}$ and $\hat H^{(1)}_{m,r}$, respectively. 

It is interesting to note  that $\Delta$ is the weight function in a natural scalar product on the space of common eigenfunctions of the operators $\hS^\pm$ proposed recently in \cite{AHL20}.

Finally, we comment on the role of the condition \eqref{gp}. In the elliptic case, we can normalize the function $[x]$ so that its zero set is $\mathbb Z+\tau\mathbb Z$ for some $\tau$ in the upper half-plane.  Then, \eqref{gp} means that
$$\delta,\ \kappa\notin \mathbb Q+\tau\mathbb Q. $$
The physically most natural case is when $\tau\in\ti\mathbb R_{>0}$ and the parameters $\beta=\ti\delta$ and $g=\kappa/\delta$ are real. This gives the conditions
$$\beta,\ g\beta\notin \ti\tau\mathbb Q. $$
We need these conditions to make the operators $H_{m,r}^{(k)}$ and $D_{m,r}^{(k)}$ well-defined for all $k$. If they are violated, the operators still make sense for a finite range of $k$, which could conceivably be extended by appropriate renormalization.

\section{Multiplicative notation}
\label{multapp}

We have considered our  operators as acting by additive shifts. In the trigonometric and elliptic cases, they can also be realized by multiplicative shifts. We will restate our main results in this form, as it is very common in the literature.

Excluding the rational case, the function  $x\mapsto [x]$ can be chosen as periodic. After rescaling the variable, we can assume that the primitive period is $2$. We then normalize the function as
\begin{equation}\label{fx}[x]=e^{-\ti\pi x}\theta(e^{2\ti\pi x};p), \end{equation}
where
$$\theta(z;p)=\prod_{j=0}^\infty(1-p^j z)\left(1-\frac{p^{j+1}}z\right) $$ 
and the elliptic nome $p$  satisfies $|p|<1$.
The  trigonometric case is included as
$$ [x]_{p=0}=e^{-\ti\pi x}(1-e^{2\ti\pi x})=-2\ti\sin(\pi x).$$
If $z=e^{2\ti\pi x}$, the additive shifts $x\mapsto x+\delta$ and $x\mapsto x-\kappa$ correspond to  $z\mapsto qz$, $z\mapsto t^{-1} z$, where
$$q=e^{2\ti\pi \delta},\qquad t=e^{2\ti\pi\kappa}. $$
The assumption \eqref{gp} means that $q,\,t\notin p^{\mathbb Q}$.

Consider the operators $H_{m,r}^{(k)}$ as acting on functions that are $1$-periodic in the variables $x_j$ and $y_j$, and hence can be expressed in terms of  $z_j=e^{2\ti \pi x_j}$ and $w_j=e^{2\ti\pi y_j}$. 
We will normalize the resulting multiplicative difference operator as
\begin{subequations}\label{ao}
\begin{equation}\label{hm}\mathbf H_{m,r}^{(k)}=
\mathbf H_{m,r}^{(k)}(z_1,\dots,z_m;w_1,\dots,w_r;q,t)=
e^{\ti\pi k\left((r-1)\delta-m\kappa\right)}H_{m,r}^{(k)}. \end{equation}
We also introduce the modified operators 
\begin{align}\label{dmm}\mathbf D_{m,r}^{(k)}&=
\mathbf H_{r,m}^{(k)}(w_1,\dots,w_r;z_1,\dots,z_m;t^{-1},q^{-1}),\\
 \hat {\mathbf H}_{m,r}^{(k)}&=\mathbf H_{m,r}^{(k)}(q^{-1}z_1,\dots,q^{-1}z_m;tw_1,\dots,tw_r;q^{-1},t^{-1}),\\
\hat{\mathbf   D}_{m,r}^{(k)}&=\mathbf H_{r,m}^{(k)}(tw_1,\dots,tw_r;q^{-1}z_1,\dots,q^{-1}z_m;t,q),
\end{align}
\end{subequations}
which are related to the additive operators used in the main text by
\begin{align*}\mathbf{D}_{m,r}^{(k)}&=e^{\ti\pi k\left(r\delta-(m-1)\kappa\right)} D_{m,r}^{(k)},\\
\hat{\mathbf H}_{m,r}^{(k)}&=e^{\ti\pi k\left(m\kappa-(r-1)\delta\right)} \hat H_{m,r}^{(k)},\\
\hat{\mathbf D}_{m,r}^{(k)}&=e^{\ti\pi k\left((m-1)\kappa-r\delta\right)} \hat D_{m,r}^{(k)}.
\end{align*}

It is straight-forward to verify that, in the notation
$$(a;q,p)_k=\theta(a;p)\theta(a q;p)\dotsm \theta(aq^{k-1};p), $$
$$T_{q,z}^\mu f(z_1,\dots,z_m)=f(q^{\mu_1}z_1,\dots,q^{\mu_m}z_m),$$
we have
$$\mathbf H_{m,r}^{(k)}=\sum_{\substack{\mu\in\mathbb Z_{\geq 0}^m,\,I\subseteq \langle r\rangle,\\|\mu|+|I|=k}} 
 C_{\mu,I}(z;w)\,T_{q,z}^{\mu}T_{t^{-1},w}^{ I},$$
 where
 \begin{multline*} C_{\mu,I}(z;w)=(-1)^{|I|}(t^{-m}q^r)^{|\mu|}q^{\binom{|I|}2}
 \prod_{i\in I,\,j\in I^c}\frac{\theta(qw_j/w_i;p)}{\theta(w_j/w_i;p)}\prod_{i,j=1}^m\frac{(tz_i/z_j;q,p)_{\mu_i}}{(qz_i/z_j;q,p)_{\mu_i}}
 \\
\times
\prod_{1\leq i<j\leq m}\frac{q^{\mu_j}\theta(q^{\mu_i-\mu_j}z_i/z_j;p)}{\theta(z_i/z_j;p)}
\prod_{i=1}^m\left(\prod_{j\in I}\frac{\theta(z_i/tw_j;p)}{\theta(q^{\mu_i}z_i/w_j;p)}\prod_{j\in I^c}\frac{\theta(z_i/qw_j;p)}{\theta(q^{\mu_i-1}z_i/w_j;p)}\right).
\end{multline*}

In multiplicative notation, Theorem \ref{ait}, Theorem \ref{wt}, Corollary \ref{wc}, Corollary~\ref{dc} and Corollary \ref{fcc} can be summarized as follows.

\begin{theorem}
For fixed $m$ and $r$, the four infinite families of operators \eqref{ao} mutually commute. 
If $q$ and $t$ are generic, the operators \eqref{hm} are algebraically independent for $1\leq k\leq m+r$.
The operators \eqref{hm} and \eqref{dmm} are related by
$$\sum_{k+l=N} t^k\theta(q^k t^l)\mathbf H_{m,r}^{(k)}\mathbf D_{m,r}^{(l)}=0, \qquad N\geq 1,$$
and by
$$\mathbf {H}_{m,r}^{(l)}=(-t)^{-l}\det_{1\leq i,j\leq l}\left(\frac{\theta(t^{i-j+1}q^{j-1})}{\theta(q^i)}\,{\mathbf{ D}}_{m,r}^{(i-j+1)}\right), $$
where one should interpret matrix elements with $i-j+1<0$ as zero. 
\end{theorem}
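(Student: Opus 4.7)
The plan is to prove this theorem as a direct translation of Theorem \ref{ait}, Theorem \ref{wt}, Corollary \ref{wc}, Corollary \ref{dc} and Corollary \ref{fcc} from the additive to the multiplicative notation. The key input is that the four multiplicative families \eqref{ao} differ from the corresponding additive families $H_{m,r}^{(k)}$, $\hat H_{m,r}^{(k)}$, $D_{m,r}^{(k)}$, $\hat D_{m,r}^{(k)}$ only by explicit scalar factors of the form $e^{\ti\pi k(\cdots)}$, as recorded in the formulas following \eqref{ao}. Therefore commutativity among the four families, which is the content of Corollary \ref{fcc}, is preserved. Likewise, algebraic independence of the operators $\mathbf{H}_{m,r}^{(1)},\dots,\mathbf{H}_{m,r}^{(m+r)}$ for generic $q,t$ is equivalent to algebraic independence of $H_{m,r}^{(1)},\dots,H_{m,r}^{(m+r)}$ for generic $\kappa,\delta$, which is Theorem \ref{ait}.

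For the Wronski relation, I would substitute $[x] = e^{-\ti\pi x}\theta(e^{2\ti\pi x};p)$ so that
\begin{equation*}
[k\kappa + l\delta] = e^{-\ti\pi(k\kappa + l\delta)}\,\theta(t^k q^l;p),
\end{equation*}
and express $D_{m,r}^{(k)}$, $H_{m,r}^{(l)}$ in terms of $\mathbf{D}_{m,r}^{(k)}$, $\mathbf{H}_{m,r}^{(l)}$ using the conversion factors stated after \eqref{ao}. The combined exponential prefactor in the $(k,l)$-term of $\sum_{k+l=N}[k\kappa+l\delta]\,D_{m,r}^{(k)}H_{m,r}^{(l)}=0$ simplifies, after collecting, to a term of the form $e^{\ti\pi c_N} e^{-2\ti\pi l\kappa} = C_N\,t^{-l}$ where $C_N$ depends only on $N=k+l$ and can therefore be pulled out of the sum and cancelled. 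Using commutativity to move $\mathbf{H}^{(k)}$ to the left of $\mathbf{D}^{(l)}$, and then multiplying the whole identity by $t^N$ to turn $t^{-l}$ into $t^{k}$, yields exactly the stated identity
\begin{equation*}
\sum_{k+l=N} t^k\theta(q^k t^l;p)\,\mathbf H_{m,r}^{(k)}\mathbf D_{m,r}^{(l)}=0.
\end{equation*}

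The determinant formula is then obtained by the same substitution applied to Corollary \ref{dc}: replace $[(i-j+1)\kappa+(j-1)\delta]$ and $[i\delta]$ by their theta-function expressions, and $D_{m,r}^{(i-j+1)}$, $H_{m,r}^{(l)}$ by the normalized operators. The exponentials in the numerators factor across each row and column of the determinant, and by multilinearity they combine with the prefactor $(-1)^l$ into the claimed $(-t)^{-l}$; consistency here is essentially forced by the fact that both sides of Corollary \ref{dc} are operator identities that the Wronski recursion determines uniquely, which can also be verified by one direct computation at $N=1$.

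The only real obstacle is the exponential bookkeeping in the passage from additive to multiplicative: one must check that all the $\ti\pi(\cdot)$ phases depending separately on $k$ and $l$ (as opposed to on $N=k+l$ alone) combine into a single factor $t^{-l}$, and that the remaining $N$-dependent phases drop out. This is a finite calculation using only $q=e^{2\ti\pi\delta}$, $t=e^{2\ti\pi\kappa}$ and the explicit prefactors in \eqref{ao}, with no further analytic input required.
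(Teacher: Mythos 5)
Your overall plan is correct and is precisely what the paper does implicitly: Appendix B states this theorem as a ``summary'' of Theorems \ref{ait} and \ref{wt} and Corollaries \ref{wc}, \ref{dc}, \ref{fcc} via the explicit scalar prefactors recorded after \eqref{ao}, so the argument is purely a translation exercise. The points about commutativity (scalar prefactors commute with everything) and algebraic independence (nonzero scalar rescaling preserves algebraic independence, and generic $q,t$ corresponds to generic $\delta,\kappa$) are both fine, and your observation that the determinant entries pick up exponentials that factor across rows and columns is indeed what makes Corollary \ref{dc} translate cleanly into the stated $(-t)^{-l}$ formula.

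There is, however, a bookkeeping slip in your Wronski computation that you should fix. With the labeling you use, namely the $(k,l)$-term $[k\kappa+l\delta]\,D_{m,r}^{(k)}H_{m,r}^{(l)}$, the index-dependent part of the combined prefactor is $e^{-2\ti\pi k\kappa}=t^{-k}$, not $t^{-l}$: collecting the exponentials from $[k\kappa+l\delta]$, from $D_{m,r}^{(k)}=e^{-\ti\pi k(r\delta-(m-1)\kappa)}\mathbf D_{m,r}^{(k)}$ and from $H_{m,r}^{(l)}=e^{-\ti\pi l((r-1)\delta-m\kappa)}\mathbf H_{m,r}^{(l)}$ gives total exponent $-\ti\pi\bigl(2k\kappa-Nm\kappa+Nr\delta\bigr)$, so the prefactor is $t^{-k}\,e^{\,\ti\pi N(m\kappa-r\delta)}$, with only the second factor depending solely on $N$. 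To recover the stated identity, one must first commute $\mathbf D_{m,r}^{(k)}\mathbf H_{m,r}^{(l)}\mapsto\mathbf H_{m,r}^{(l)}\mathbf D_{m,r}^{(k)}$, then relabel $k\leftrightarrow l$ (so the $\mathbf H$ carries index $k$), and only then multiply by $t^N$: this turns $t^{-l}$ into $t^k$ and $\theta(t^lq^k;p)$ into $\theta(q^kt^l;p)$ as required. If you instead carry your claimed $t^{-l}$ through these same manipulations you would land on $t^l$ rather than $t^k$, which is not the stated result, so the slip is material and not merely a harmless typo.
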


To write Theorem \ref{kt} in multiplicative notation takes some more work.
We will express the kernel function in terms of the elliptic gamma function \cite{ru2}
$$\Gamma(z;p,q)=\prod_{j,k=0}^\infty\frac{1-p^{j+1}q^{k+1}/z}{1-p^jq^k z}, \qquad |p|,\,|q|<1,$$
which satisfies the $q$-difference equation
$$ \frac{\Gamma(qz;p,q)}{\Gamma(z;p,q)}=\theta(z;p).$$
Equivalently, the function
\begin{subequations}\label{gd}
\begin{equation}\label{gda}G_\delta(x)=e^{\frac{\ti\pi x(\delta-x)}{2\delta}}\Gamma(e^{2\ti\pi x};p,q) \end{equation}
satisfies \eqref{gfe}. This solution is valid for $|q|<1$, that is, $\operatorname{Im}(\delta)>0$.  If $\operatorname{Im}(\delta)<0$, one can instead take
\begin{equation}G_\delta(x)=\frac {e^{\frac{\ti\pi x(\delta-x)}{2\delta}}}{\Gamma(q^{-1}e^{2\ti\pi x};p,q^{-1})}. \end{equation}
\end{subequations}
In either case, the general solution of \eqref{gfe} is $G_\delta$ times an arbitrary $\delta$-periodic meromorphic function.
The construction of solutions to \eqref{gfe} with real $\delta$ (that is, $|q|=1$)
is more complicated \cite{s}, so  we will assume for simplicity that $|q|,\,|t|\neq 1$. 
In the case 
$|q|<1<|t|$, we introduce the multiplicative kernel function
\begin{multline}\label{mkf}
\mathbf \Phi^{(m,r,n,s)}(z_1,\dots,z_m;w_1,\dots,w_r;Z_1,\dots,Z_n;W_1,\dots,W_s)\\
=\prod_{\substack{1\leq i\leq m,\\1\leq j\leq n}}\frac{\Gamma(t^{-1}z_iZ_j;p,q)}{\Gamma(z_iZ_j;p,q)}
\prod_{\substack{1\leq i\leq r,\\1\leq j\leq s}}\frac{\Gamma(qw_iW_j;p,t^{-1})}{\Gamma(w_iW_j;p,t^{-1})}\\
\times\prod_{\substack{1\leq i\leq m,\\1\leq j\leq s}}\theta(z_iW_j;p)
\prod_{\substack{1\leq i\leq r,\\1\leq j\leq n}}\theta(w_i Z_j;p).
\end{multline}
If  one or both the parameters $|q|$ and $|t^{-1}|$  is larger than $1$,
we define $\mathbf \Phi$ by the expression obtained from \eqref{mkf} after making
  the formal replacement
$$\Gamma(x;p,s)\mapsto\frac 1{\Gamma(sx;p,1/s)},\qquad |s|>1. $$

\begin{theorem}\label{mkt}
Assuming $ t^{m-n}=q^{r-s}$, the kernel function identity
\begin{equation}\label{mki}\mathbf H_{m,r}^{(k)}
{(z;w)}\mathbf \Phi^{(m,r,n,s)}(z;w;Z;W)
=\mathbf H_{n,s}^{(k)}
{(Z;W)}\mathbf \Phi^{(m,r,n,s)}(z;w;Z;W)\end{equation}
holds.
\end{theorem}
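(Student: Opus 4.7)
The plan is to deduce Theorem \ref{mkt} from Theorem \ref{kt} via the exponential substitution $z_j=e^{2\ti\pi x_j}$, $w_j=e^{2\ti\pi y_j}$, $Z_j=e^{2\ti\pi X_j}$, $W_j=e^{2\ti\pi Y_j}$, with $[x]$ normalized as in \eqref{fx} and $G_\delta$ taken to be \eqref{gda} (or its dual form for $|q|>1$).

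The first step is to compare the two kernel functions. Using \eqref{gda} and \eqref{fx}, a direct computation yields
\begin{equation*}
\Phi^{(m,r,n,s)}(x;y;X;Y) = e^{Q(x,y,X,Y)}\,\mathbf\Phi^{(m,r,n,s)}(z;w;Z;W),
\end{equation*}
for an explicit polynomial $Q$. Collecting the Gaussian prefactors in $G_\delta$ and $G_{-\kappa}$ together with the $e^{-\ti\pi x}$ from $[\cdot]$, all quadratic contributions cancel and $Q$ reduces to a function linear in $|x|$, $|y|$, $|X|$, $|Y|$ (modulo a constant). Consequently, under the shift $x_i\mapsto x_i+\delta\mu_i$, $y_j\mapsto y_j-\kappa I_j$ with $|\mu|+|I|=k$, the increment of $Q$ equals the $(\mu,I)$-independent constant $\ti\pi(n\kappa-s\delta)k$, and symmetrically $\ti\pi(m\kappa-r\delta)k$ under the shift in $X,Y$.

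Since the coefficients of $H_{m,r}^{(k)}$ are $1$-periodic in $x,y$, applying $H_{m,r}^{(k)}(x;y)$ to $\Phi=e^Q\mathbf\Phi$ factors as $e^Q$ times $e^{\ti\pi(n\kappa-s\delta)k}$ times $H_{m,r}^{(k)}(x;y)\mathbf\Phi$, and the latter equals $e^{-\ti\pi k((r-1)\delta-m\kappa)}\mathbf H_{m,r}^{(k)}(z;w)\mathbf\Phi$ by the normalization \eqref{hm}. Carrying out the same computation on the right-hand side of \eqref{kfi} and comparing, the accumulated scalar prefactors on both sides simplify to the common value $\ti\pi k((m+n)\kappa+(1-r-s)\delta)$ and cancel, so \eqref{kfi} translates directly into \eqref{mki}. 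The discrepancy between the additive balancing $(m-n)\kappa=(r-s)\delta$ and the weaker multiplicative one $t^{m-n}=q^{r-s}$ is reconciled by choosing integer-adjusted lifts of $(q,t)$ that satisfy the additive condition (available generically under \eqref{gp}) or, in the remaining cases, by analytic continuation in the parameters $(q,t,p)$, as in the proof of Proposition \ref{lsl}.

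The main obstacle I expect is step one: verifying that the quadratic contributions from the $G_\delta$ and $G_{-\kappa}$ Gaussians cancel against each other and against the $e^{-\ti\pi x}$ factors in $[x+Y]$ and $[y+X]$. Somewhat remarkably, this cancellation does not itself require the balancing condition; the balancing enters only through \eqref{kfi}. A similar but simpler Gaussian cancellation arises in the additive-to-multiplicative translation of Theorems \ref{ct} and \ref{sct}, giving a useful template for the calculation here.
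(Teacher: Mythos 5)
Your proof follows the paper's argument exactly: compute $\Phi/\mathbf\Phi$ via \eqref{fx} and \eqref{gd}, observe the quadratic Gaussian contributions cancel within each $G$-ratio to leave the linear exponent $\ti\pi(A(x;y)+B(X;Y))$, conjugate $H_{m,r}^{(k)}$ by $e^{\ti\pi A}$ to get the prefactor $e^{\ti\pi k(n\kappa-s\delta)}$, combine with the normalization in \eqref{hm} to get the common prefactor $e^{\ti\pi k((m+n)\kappa-(r+s-1)\delta)}$, and cancel it from both sides of \eqref{ckf}. The computation and the numerology match the paper line by line, and your aside that the Gaussian cancellation is automatic (not conditional on balancing) is correct.

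The one place where you go beyond the paper is the closing remark about the mismatch between the additive balancing $(m-n)\kappa=(r-s)\delta$ and the multiplicative one $t^{m-n}=q^{r-s}$. You are right that this is a genuine mismatch that the paper does not comment on, but your proposed fix is not quite sound. Writing $M=m-n$, $R=r-s$, the multiplicative condition reads $M\kappa-R\delta=j$ for some $j\in\mathbb Z$; shifting $\delta\mapsto\delta+a$, $\kappa\mapsto\kappa+b$ with $a,b\in\mathbb Z$ changes $j$ by $Mb-Ra$, so an integer-adjusted lift achieving $j=0$ exists precisely when $\gcd(M,R)\mid j$. This is not generic but a Bezout obstruction fixed on each connected component of the hypersurface $\{t^{M}=q^{R}\}$; consequently an analytic continuation in $(q,t,p)$ along the hypersurface, as in Proposition \ref{lsl}, cannot move you to a component where the lift exists. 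In the principal case $m=n$, $r=s$ (and more generally whenever $\gcd(M,R)\mid j$) both conditions coincide and your argument, like the paper's, is complete; for the remaining $j$ the derivation from Theorem \ref{kt} simply does not apply as written, and one would need a separate argument or a sharpened statement.
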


 In particular, \eqref{mki} holds if
$m=n$ and $r=s$.

To prove Theorem \ref{mkt}, we insert  \eqref{fx} and \eqref{gd} into  \eqref{ph}.
In terms of the multiplicative variables $z_j=e^{2\ti\pi x_j}$, $w_j=e^{2\ti\pi y_j}$, $Z_j=e^{2\ti\pi X_j}$ and $W_j=e^{2\ti\pi Y_j}$,  the additive and multiplicative kernel functions are related by
\begin{multline*}\frac{\Phi^{(m,r,n,s)}(x;y;X;Y)}{\mathbf\Phi^{(m,r,n,s)}(z;w;Z;W)}\\
=\prod_{\substack{1\leq i\leq m,\\ 1\leq j\leq n}}\frac{e^{\ti\pi(x_i+X_j-\kappa)(\delta-x_i-X_j+\kappa)/2\delta}}{e^{\ti\pi(x_i+X_j)(\delta-x_i-X_j)/2\delta}} 
\prod_{\substack{1\leq i\leq r,\\ 1\leq j\leq s}}\frac{e^{-\ti\pi(y_i+Y_j+\delta)(-\kappa-y_i-Y_j-\delta)/2\kappa}}{e^{-\ti\pi(y_i+Y_j)(-\kappa-x_i-X_j)/2\kappa}} 
\\
\times
\prod_{\substack{1\leq i\leq m,\\ 1\leq j\leq s}}e^{-\ti\pi(x_i+Y_j)}
\prod_{\substack{1\leq i\leq r,\\ 1\leq j\leq n}}e^{-\ti\pi(y_i+X_j)}
=Ce^{\ti\pi(A(x;y)+B(X;Y))},\end{multline*}
where  $C$ is an irrelevant constant and
\begin{align*}
A(x;y)&=\frac{\kappa}{\delta}\,n|x|+\frac{\delta}{\kappa}\,s|y|-s|x|-n|y|,\\
B(X;Y)&=\frac{\kappa}{\delta}\,m|X|+\frac{\delta}{\kappa}\,r|Y|-r|X|-m|Y|.
\end{align*}

We  can now write the kernel function identity \eqref{kfi} as
\begin{multline}\label{ckf}e^{-\ti\pi A(x;y)}H_{m,r}^{(k)}{(x;y)}e^{\ti\pi A(x;y)}\mathbf \Phi^{m,r,n,s}(z;w;Z;W)\\
= e^{-\ti\pi B(X;Y)}H_{n,s}^{(k)}{(X;Y)}e^{\ti\pi B(X;Y)}\mathbf \Phi^{(m,r,n,s)}(z;w;Z;W).\end{multline}
The operator on the left is a sum of terms of the form
$$ e^{-\ti\pi A(x;y)}T_x^{\delta\mu}T_y^{-\kappa I} e^{\ti\pi A(x;y)}
=e^{\ti\pi\left(A(x+\delta\mu;y-\kappa I)-A(x;y)\right)}T_x^{\delta\mu}T_y^{-\kappa I} =e^{\ti\pi k(n\kappa -s\delta )}T_x^{\delta\mu}T_y^{-\kappa I} .
$$
Hence,
$$ e^{-\ti\pi A(x;y)} H_{m,r}^{(k)} e^{\ti\pi A(x;y)}=e^{\ti\pi k(n\kappa -s\delta )}H_{m,r}^{(k)}
=e^{\ti\pi k((m+n)\kappa - (r+s-1)\delta)}\mathbf H_{m,r}^{(k)}
.$$
On the right-hand side of \eqref{ckf}, the same exponential prefactor appears and can be canceled. This proves Theorem \ref{mkt}.

 \end{document}